\newcommand{\sol}{\textup{\texttt{VISER}}\xspace}
\newcommand{\mpsol}{\textup{\texttt{MPVISER}}\xspace}
\DeclareMathOperator*{\argmax}{arg\,max}
\newcommand{\Real}{\mathbb{R}}
\newcommand{\brac}[1]{\left[ #1 \right]}
\newcommand{\set}[1]{\left\{ #1 \right\}}
\DeclareMathOperator*{\E}{E}
\DeclareMathOperator*{\BR}{BR}
\newcommand{\NE}{\textup{NE}}
\theoremstyle{plain}
\newtheorem{theorem}{Theorem}
\newtheorem{lemma}{Lemma}
\newtheorem{corollary}{Corollary}
\newtheorem{observation}{Observation}
\newtheorem{proposition}{Proposition}
\newtheorem{assumption}{Assumption}
\theoremstyle{definition}
\newtheorem{definition}{Definition}
\title{\sol: A Tractable Solution Concept for Games with Information Asymmetry}
\author{%
  Jeremy McMahan \\ 
  University of Wisconsin-Madison \\
  \texttt{jmcmahan@wisc.edu} \\
  \And
  Young Wu \\ 
  University of Wisconsin-Madison \\
  \texttt{yw@cs.wisc.edu} \\
  \And
  Yudong Chen \\ 
  University of Wisconsin-Madison \\
  \texttt{yudong.chen@wisc.edu} \\
  \And 
  Xiaojin Zhu \\ University of Wisconsin-Madison \\
  \texttt{jerryzhu@cs.wisc.edu} \\
  \And 
  Qiaomin Xie \\ University of Wisconsin-Madison \\
  \texttt{qiaomin.xie@wisc.edu} \\
}
\begin{document}

\maketitle

\begin{abstract}
Many real-world games suffer from information asymmetry: one player is only aware of their own payoffs while the other player has the full game information. Examples include the critical domain of security games and adversarial multi-agent reinforcement learning. Information asymmetry renders traditional solution concepts such as Strong Stackelberg Equilibrium (SSE) and Robust-Optimization Equilibrium (ROE) inoperative. We propose a novel solution concept called VISER (Victim Is Secure, Exploiter best-Responds). VISER enables an external observer to predict the outcome of such games. In particular, for security applications VISER allows the victim to better defend itself while characterizing the most damaging attacks available to the exploiter.  We show that each player’s VISER strategy can be computed independently in polynomial time using linear programming (LP). We extend VISER to its Markov Perfect counterpart in Markov games, which can also be solved efficiently by a series of LPs.
\end{abstract}


\section{Introduction}\label{sec: intro}

As multi-agent systems become increasingly distributed and complex, information asymmetry amongst agents becomes inevitable. 
We focus on two-player general-sum games (and by extension, Markov games) with the following information asymmetry: the first player knows the game except the second player's payoff function, while the second player has full information of the game.
Obviously, the second player can exploit its information advantage to optimize its own payoff, possibly at the expense of the first player. As such, we refer to the second player as the \emph{exploiter} and the first player as the \emph{victim}. 
For example, in security applications and adversarial MARL the victim often does not know the exploiter's true incentive or cost of taking each attack action~\citep{StackelbergAppsExst, gleave2021adversarial}. 
In contrast, the exploiter knows the victims payoff in addition to its own.
Given such a game with information asymmetry, can we predict what rational players will do?

\paragraph{Prior work.}
Incomplete information games were first studied through the framework of Bayesian games~\citep{BNE1, BNE2, BNE3} with corresponding solution concept being the Bayes Nash Equilibrium (BNE). Importantly, the Bayesian approach requires the players to have ``known unknowns'': an accurate distribution of the unknown parameters of the game. To address the high sensitivity BNE with respect to the player's beliefs~\citep{BNESensExample, BNESensitivity}, the work~\citep{ExPostEquil} introduced a more robust equilibrium concept called ex-post equilibrium, which is a NE under all possible realizations of the uncertain parameters. Then,~\citep{RobustDistr} studied another robust variation, where the players have multiple belief distributions and they compute the robust solution (maximin) over the worst possible scenario. 
More recently, the Bayesian approach has been extended to deal with type-uncertainty in Markov security games~\citep{BayesMarkov}. 



Going beyond the need for belief distributions,~\citep{RGT} introduced the notion of robust games where the uncertainty is defined by a deterministic set over parameters defining the shape of the payoff functions. The solution concept to a robust game is a robust-optimization equilibrium (ROE), which is found using robust optimization techniques with respect to the worst-case parameters in the uncertainty set. ROE is a distribution-free solution concept that was shown to be less conservative than ex-post equilibrium in that every ex-post equilibrium is an ROE~\citep{RGT}. Then,~\citep{GRGT} generalized ROE to allow a broader class of payoff functions and showed that ROE can be seen as an $\epsilon$-NE in the game without uncertainty. Recently,~\citep{SRGT} studied how sensitive ROE is with respect to the uncertainty levels of the players. However, both the BNE and ROE approach require non-trivial assumptions about the information structure: namely, an uncertainty parametrization or distributional assumption on the opponent's payoffs. Thus, they do not apply to our setting where the victim knows nothing about the exploiter's payoffs.


Besides the issue of requiring a parametrization or distribution of the opponent's payoff, computing a NE (BNE and ROE) of a general-sum game is PPAD-hard. Luckily, using the power of commitment, a  Strong Stackelberg Equilibrium (SSE), which yields the victim a higher payoff than in a NE, can be computed in polynomial-time for two-player games~\citep{Stackelberg, StackelbergCompute}. In fact, even for security games with exponentially large strategy spaces, a SSE can still be computed efficiently~\citep{StackCompSecurity}. As many real-world security games can satisfy the commitment assumption, the SSE has been the dominant solution concept for security applications~\citep{StackelbergSurvey}. For example, Stackelberg strategies are the heart of LAX's airport security ARMOR system~\citep{StackelbergLAX, StackelbergApplications} and are also used in the system IRIS used by the US Federal Air Marshals~\citep{StackelbergVsNash} with many other important applications~\citep{StackelbergAppsExst}. Recently, the Stackelberg approach has been extended to Markov security games~\citep{StackelbergMarkov}, but at the cost of even stronger assumptions on the information structure of the players. Information asymmetry has also been studied for SSE, but only when the asymmetry comes from either the exploiter being unable to fully see the commitment~\citep{UnobservableCommit} or the victim has more information~\citep{Signaling, TwoStage}, which is virtually the opposite of our setting. 

\paragraph{Our contributions.}
First, (i) we propose and study a new solution concept, \sol, which is appropriate for games with information asymmetry. 
\sol stands for \textit{\underline{v}ictim \underline{i}s \underline{s}ecure and the \underline{e}xploiter best-\underline{r}esponds}. 
\sol posits that a rational victim, lacking information on the exploiter, will necessarily choose from the set of its own maximin strategies.
The exploiter can reason the victim will do this, but does not know the particular maximin strategy chosen. Thus, a rational exploiter must best respond to the worst-case member of the victim maximin strategy set.
Importantly, a \sol solution always exists and each player can independently play its \sol strategy without the need for coordination or signaling. Also, if either player unilaterally deviates from a \sol solution, then they will achieve lower payoff in the worst-case, so \sol functions as an an equilibrium concept. Second, (ii) we show that each player can compute a \sol strategy independently in polynomial time by solving a linear program (LP). A key technical insight is that the exploiter can use LP duality to convert its robust, best-response optimization problem into a single LP. Third, (iii) we generalize our framework to Markov games (also known as stochastic games). Here, it is more natural to study Markov-perfect \sol solutions (\mpsol), which consist of Markovian policies, as standard in the theory of stochastic games~\citep{Markov}. We show a \mpsol solution is a \sol solution, it always exists, and each player can compute a \mpsol solution in polynomial time by solving a sequence of LPs via backward induction. 

\section{Preliminaries}\label{sec: preliminaries}

\paragraph{Bimatrix Games.} In a (finite) bimatrix game, two players compete simultaneously to maximize their own payoff. Suppose that the first player, the victim, has $n$ pure strategies and the second player, the exploiter, has $m$ pure strategies. Let $A \in \Real^{n \times m}$ and $B \in \Real^{n \times m}$ denote the payoff matrices for the victim and exploiter, respectively. We may represent a pure strategy by a one-hot vector, so $e_i \in \Real^n$ corresponds to victim's strategy $i$ and $e_j \in \Real^m$ the exploiter's strategy $j$. Let $ \Delta(k) := \big\{ s \in [0,1]^k \mid \sum_{i = 1}^k s_i = 1 \big\}$ denote the set of mixed strategies, where choosing $s \in  \Delta(k)$  corresponds to playing $e_i$ with probability $s_i$. For a pair mixed strategies $x \in \Delta(n)$ and $y \in \Delta(m)$, the expected payoffs to the victim and exploiter are $x^\top A y$ and $x^\top B y$, respectively.


\paragraph{Nash Equilibrium and Security Strategies.} Solutions to games manifest as equilibrium concepts, among which the most famous is the \emph{Nash Equilibrium} (NE)~\citep{Nash}. An NE of a bimatrix game is a pair of strategies $(x^*,y^*) \in \Delta(n)\times\Delta(m)$ satisfying 
\begin{equation*}\label{equ: nash}
x^* \in \argmax_{x \in \Delta(n)} x^\top A y^*, \;\;\text{ and }\;\; y^* \in \argmax_{y \in \Delta(m)} {x^*}^\top B y.
\end{equation*}
Another solution concept is a \emph{maximin strategy} or \emph{security strategy}, which is a pair $(x^*,y^*)$ given by 
\begin{equation}
\label{equ: maximin}
x^* \in \argmax_{x \in \Delta(n)} \min_{y \in \Delta(m)} x^\top A y, \;\;\text{ and }\;\; y^* \in \argmax_{y \in \Delta(m)} \min_{x \in \Delta(n)} x^\top B y.
\end{equation}
In a \emph{zero-sum} game ($B = -A$), the Minimax Theorem~\citep{maximin} implies $(x^*,y^*)$ is a NE if and only if it is a maximin strategy pair. Note that a game may have multiple NEs and maximin strategies. However, in zero-sum games, each player receives the same expected payoff in every NE, which we denote by $p_v^{NE}$ and $p_e^{NE}$ respectively. 

\paragraph{Markov Games.} 
A finite-horizon \emph{Markov game}~\citep{Markov} is defined by a tuple $G = (S,A,R,P,H,\mu)$ with
state space $S$,
joint action space $A = A_v \times A_e = [n] \times [m]$ ($[i] := \set{1, \ldots, i}$),
joint reward function $R$, 
transition function $P$,
horizon $H$,
and initial state distribution $\mu$. 
We denote by $\pi = \{\pi_h(s) \in \Delta(n)\}_{h,s}$ a Markovian policy for the victim and by $\nu = \{\nu_h(s) \in \Delta(m)\}_{h,s}$ a Markovian policy for the exploiter. Let $\Pi$ denote the set of all Markovian policies for the victim and $N$ denote the set of all Markovian policies for the exploiter. For $u\in\{v,e\}$ (victim and exploiter), the value/payoff received by player $u$ under $(\pi,\nu)$ is the expected cumulative rewards over $H$ time steps: $V_u^{\pi,\nu} \equiv V_{1u}^{\pi,\nu}  := \E_{\pi,\nu}\brac{\sum_{h = 1}^H \pi_{h}(s_h)^\top R_{hu}(s_h)\nu_{h}(s_h)}$.
Similarly define the stage value, $V_{hu}^{\pi^*,\nu^*}(s)$, for each $h\in[H]$ by summing over steps $h$ through $H$. 


Equilibrium concepts can be defined for a Markov Game by viewing it as a (very large) bimatrix game with payoff matrices $(V_v^{\pi,\nu})_{\pi,\nu}$ and $  (V_e^{\pi,\nu})_{\pi,\nu}$. To avoid this complexity blowup, many works focus on \emph{Markov Perfect Equilibrum} (MPE), which requires the stricter property that a policy pair is an equilibrium at \emph{every} stage game, not just at stage $h=1$. 
Formally, $(\pi^*, \nu^*)$ is a MPE if, for all $(h,s)\in [H]\times S$, 
\begin{equation*}\label{equ: mpe}
    V_{hv}^{\pi^*,\nu^*}(s) = \max_{\pi \in \Pi} V^{\pi,\nu^*}_{hv}(s)  \;\;\text{ and }\;\; V_{he}^{\pi^*,\nu^*}(s) = \max_{\nu \in N} V^{\pi^*,\nu}_{he}(s).
\end{equation*}

\section{\sol: A New Solution Concept}\label{sec: sol}


Suppose the victim and exploiter compete in a bimatrix game $(A,B)$ with information asymmetry. The victim knows only $A$, but the exploiter knows
both $A$ and $B$. The exploiter naturally wishes to leverage its information advantage: it seeks to derive a strategy $y^*$ that achieves a higher value than it would from only knowing $B$. On the other hand, the victim is aware of its information disadvantage and thus simply wants to guarantee a good value regardless of the exploiter's strategy.


\paragraph{Rationality.} As shown in \citep{UncertaintyAversion}, real players are averse to uncertainty. This is reinforced by the fact if a player guesses some belief incorrectly, it risks achieving arbitrarily poor value~\citep{BNESensExample}. Consequently, as argued by worst-case approaches in game theory~\citep{RGT}, it is rational for such players to use a strategy that optimizes its worst-case scenario since it cannot formulate any accurate beliefs of the other player's payoff. We assume the players follow this reasoning, which we call \emph{worst-case rationality}.

\begin{assumption}(Worst-Case Rationality)\label{assump: rationality}
    Both players seek to optimize their worst-case payoff with respect to the information available to them, and this fact is common knowledge to both players.
\end{assumption}


\paragraph{Victim's Perspective.} Absent any information about the exploiter's strategy or payoff, the safest option for the victim is to play a security strategy \cref{equ: maximin}. A security strategy guarantees the victim a baseline payoff regardless of how the exploiter uses its information advantage to pick a strategy. 
And, importantly, a security strategy can be computed from knowledge of $A$ alone.
Equivalently, since the victim has no knowledge whatsoever about $B$, we can view them as \emph{assuming} that $B=-A$, i.e., the exploiter is adversarial, and then computing an NE of the zero-sum game $(A,-A)$.
\begin{observation}\label{assump: victim}(Victim Behavior)
    Under \cref{assump: rationality}, the victim will play:
    \begin{equation*}\label{equ: victim}
        x^* \in X^* := \argmax_{x \in \Delta(n)} \min_{y \in \Delta(m)} x^\top A y,
    \end{equation*}
which guarantees the victim an expected payoff of at least,
\begin{equation*}
p_v := \max_{x \in \Delta(n)} \min_{y \in \Delta(m)} x^\top A y.
\end{equation*}
\end{observation}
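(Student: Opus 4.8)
The plan is to treat this observation as having two components: a modeling claim that worst-case rationality forces the victim into the maximin set $X^*$, and a verification that any member of $X^*$ secures a payoff of at least $p_v$. I would handle them in that order.

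First, I would formalize the victim's decision problem under \cref{assump: rationality}. The victim's realized payoff from committing to $x \in \Delta(n)$ is $x^\top A y$, where $y \in \Delta(m)$ is the exploiter's strategy. The victim knows $A$ but has no information about $B$, and the exploiter's choice of $y$ is driven entirely by $B$. The key step is to argue that the victim therefore cannot exclude any $y \in \Delta(m)$: for any candidate exploiter strategy there is some payoff matrix that would make it rational for the exploiter, so from the victim's vantage point every $y$ is admissible. Consequently, the worst-case payoff of playing $x$ is exactly $\min_{y \in \Delta(m)} x^\top A y$, and worst-case rationality dictates maximizing this quantity over $x$. This is precisely the maximin problem defining $X^*$ and $p_v$, matching the security-strategy formulation in \cref{equ: maximin}; equivalently, it is the victim's half of an NE of the zero-sum game $(A,-A)$.

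Second, I would verify the payoff guarantee, which is immediate from the definition of $X^*$. Fix any $x^* \in X^*$. By definition of the maximin value, $\min_{y \in \Delta(m)} (x^*)^\top A y = p_v$, so for every exploiter strategy $y \in \Delta(m)$ we have $(x^*)^\top A y \ge p_v$. Hence, regardless of which strategy the exploiter actually selects, the victim's realized expected payoff is at least $p_v$. I would also note that $X^*$ is nonempty and $p_v$ well-defined, since $x \mapsto \min_{y \in \Delta(m)} x^\top A y$ is a pointwise minimum of linear functions---hence concave and continuous---on the compact set $\Delta(n)$, so the maximum is attained.

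The main obstacle is not computational but conceptual: justifying that the relevant worst case ranges over the entire simplex $\Delta(m)$. This is exactly where the information-asymmetry assumption does the work, and I would emphasize that the conclusion hinges on the victim's total ignorance of $B$---any additional structure (a distribution or an uncertainty set) would shrink the admissible set of exploiter responses and could raise the secured value above $p_v$, which is precisely the gap that separates our setting from the BNE and ROE frameworks.
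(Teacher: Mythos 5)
Your proposal is correct and takes essentially the same approach as the paper: both arguments turn the victim's total ignorance of $B$ into the claim that no exploiter strategy $y \in \Delta(m)$ can be excluded, so worst-case rationality reduces the victim's problem to the maximin problem defining $X^*$ and $p_v$. The paper phrases this contrapositively (any $x \notin X^*$ admits some non-excludable $y$ with $x^\top A y < p_v$, and knowing \cref{assump: exploiter} does not help since predicting $y^*$ requires $B$), while your rationalizability observation---every $y$ is optimal for some $B$---makes the same point directly and even covers that subtlety.
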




The intuition is that if the victim does not play some $x^* \in X^*$ it could receive \emph{arbitrarily worse} payoff in the worst case. To illustrate this point, consider the game in \cref{table: true-game}. 
If the victim were to optimistically, but incorrectly, believe the game was actually the one in \cref{table: fake-game}, then it would play action D as part of the NE (D, R). However, in the true game~\ref{table: true-game}, playing D results in the worst-case scenario of receiving $0$ payoff. In contrast, if the victim were to have played its unique security strategy, U, it would have received at least a payoff of $x$, which is arbitrarily larger than $0$. 


\begin{figure}
    \centering
    \begin{subfigure}[b]{.3\textwidth}
        \centering
        \begin{tabular}{c|c|c|}
          & L & R \\
         \hline
        U & (x,y) & (2x,0) \\
        \hline
        D & (0,y) & (3x,0) \\
        \hline
        \end{tabular}
        \caption{Victim's Example}
        \label{table: true-game}
    \end{subfigure}
    \vline
    \begin{subfigure}[b]{.3\textwidth}
        \centering
        \begin{tabular}{c|c|c|}
         & L & R \\
         \hline
        U & (x,0) & (2x,y) \\
        \hline
        D & (0,0) & (3x,y) \\
        \hline
        \end{tabular}
        \caption{Victim's Belief}
        \label{table: fake-game}
    \end{subfigure}
    \vline
    \begin{subfigure}[b]{.3\textwidth}
        \centering
        \begin{tabular}{c|c|c|}
         & L & R \\
         \hline
        U & (x,2y) & (x,-1)  \\
        \hline
        M & (x,y) & (x,-1) \\
        \hline
        D & (-1,-1) & (-1,0) \\
        \hline
        \end{tabular}
        \caption{Exploiter's Example}
        \label{table: exploiter-game}
    \end{subfigure}
    \caption{Example Games}
    \label{fig: examples}
\end{figure}

\paragraph{Exploiter's Perspective.} The exploiter could also compute its own security strategy, but this would not take advantage of its information advantage. The exploiter knows both $A$ and $B$ and can reason that under \cref{assump: rationality} the victim will play some $x^* \in X^*$ as indicated by \cref{assump: victim}. However, the exploiter is not completely omniscient. It does not know which $x^* \in X^*$ is chosen and so suffers from strategic uncertainty~\citep{Ambiguity}. Thus, the exploiter should assume the worst-case $x^* \in X^*$ is chosen by the victim and simply best-respond to it. Here, we generalize the best-response function to a function on a set $X^*$ by considering the worst-case best-response. 
\begin{observation}\label{assump: exploiter}(Exploiter Behavior)
    Under \cref{assump: rationality}, the exploiter will play:
    \begin{equation*}\label{equ: exploiter}
        y^* \in \BR(X^*) := \argmax_{y \in \Delta(m)} \min_{x^* \in X^*} x^{*^\top} B y,
    \end{equation*}
which guarantees the exploiter an expected payoff of at least,
\begin{equation*}
p_e := \max_{y \in \Delta(m)} \min_{x^* \in X^*} x^{*^\top} B y.
\end{equation*}
\end{observation}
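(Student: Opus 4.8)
The plan is to derive the exploiter's strategy and payoff guarantee directly from \cref{assump: rationality} together with \cref{assump: victim}, treating the unknown choice of victim strategy as a set-valued uncertainty to be resolved in the worst case. First I would invoke the common-knowledge clause of \cref{assump: rationality}: since the victim's worst-case rationality is common knowledge and the exploiter knows $A$, the exploiter can reproduce the reasoning of \cref{assump: victim} and conclude that the realized victim strategy satisfies $x^* \in X^*$. Crucially, the exploiter's information is exactly this set membership and nothing finer---it observes no signal distinguishing one maximin strategy from another---so its residual uncertainty is precisely the strategic uncertainty over which element of $X^*$ the victim plays.

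Next I would apply \cref{assump: rationality} to the exploiter itself. Optimizing the worst-case payoff ``with respect to the information available'' means: for any candidate $y \in \Delta(m)$, the exploiter evaluates the least payoff compatible with its information, namely $\min_{x^* \in X^*} x^{*\top} B y$, and then selects $y$ to maximize this quantity. This yields exactly $y^* \in \BR(X^*) = \argmax_{y \in \Delta(m)} \min_{x^* \in X^*} x^{*\top} B y$ and the guarantee $p_e = \max_{y \in \Delta(m)} \min_{x^* \in X^*} x^{*\top} B y$. Indeed, by the definition of the inner minimum, any $y^* \in \BR(X^*)$ secures $x^{*\top} B y^* \ge p_e$ against every $x^* \in X^*$, while no $y$ can guarantee strictly more in the worst case.

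The remaining step is to confirm these objects are well defined, i.e.\ that the inner minimum and outer maximum are attained. I would note that $X^*$ is the optimal solution set of the victim's maximin linear program (writing $\min_y x^\top A y = \min_j (x^\top A)_j$ shows the problem is an LP), hence a nonempty, bounded polytope and in particular compact; since $x^* \mapsto x^{*\top} B y$ is linear, the minimum over $X^*$ is attained for each fixed $y$. The resulting function $g(y) := \min_{x^* \in X^*} x^{*\top} B y$ is a pointwise minimum of linear functions, hence concave and continuous on the compact simplex $\Delta(m)$, so its maximum is attained and $\BR(X^*) \neq \emptyset$.

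I expect the main obstacle to be conceptual rather than computational: the argument must pin down that the exploiter's worst case is taken over exactly $X^*$---no larger, which would discard the consequence of \cref{assump: victim}, and no smaller, which would presume information the exploiter does not possess. Once the uncertainty set is correctly identified, the rest is a routine application of worst-case optimization together with the compactness of $X^*$ and $\Delta(m)$.
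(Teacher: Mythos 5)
Your proposal is correct and follows essentially the same reasoning as the paper: the exploiter's residual uncertainty is exactly which element of $X^*$ the victim will play, and worst-case rationality then prescribes maximizing $\min_{x^* \in X^*} x^{*\top} B y$, i.e., playing from $\BR(X^*)$ --- the paper phrases this contrapositively (any $y \notin \BR(X^*)$ admits some $x^* \in X^*$ giving payoff below $p_e$, a risk the exploiter cannot rule out), but the content is identical. Your additional verification that the inner minimum and outer maximum are attained (via compactness of the polytope $X^*$ and concavity of the pointwise minimum) is a sound self-contained touch; the paper instead defers non-emptiness of $\BR(X^*)$ to \cref{prop: exploiter-strategy}, where it follows from feasibility and boundedness of the exploiter's LP.
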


Critically, the exploiter uses its information advantage to obtain a payoff higher than what can be achieved by an unrestricted best response $y\in\BR(\Delta(n))$, which consists of its security strategies as defined in \cref{equ: maximin}. 
In fact, playing a security strategy could lead to \emph{arbitrarily worse} payoff for the exploiter. Consider the game in \cref{table: exploiter-game}. It is clear that any mix of U and M yields a security strategy for the victim, which achieves the victim a payoff of $x$. The exploiter's best response in the worst case is L, which achieves it a payoff of at least $y$. In contrast, its unique security strategy is R, which achieves a payoff of at most $0$ in general and at most $-1$ if the victim plays rationally. Thus, playing the worst-case best-response leads to an arbitrarily higher payoff for the exploiter.


\paragraph{\sol Solution.} 
Under \cref{assump: rationality}, \cref{assump: victim} and \cref{assump: exploiter} imply a game with information asymmetry will play out as follows: the victim will play a security strategy from $X^*$ and the exploiter plays a worst-case best-response strategy from $\BR(X^*)$. This leads to our solution concept we call \sol for \emph{\underline{v}ictim \underline{i}s \underline{s}ecure and the \underline{e}xploiter best-\underline{r}esponds}.
\begin{definition}(\sol)
\begin{equation*}
    \sol := X^* \times \BR(X^*).
\end{equation*}
\end{definition}
Importantly, each player can compute its own part of a \sol pair given only the information available to it: no coordination or signaling schemes are needed. Also, a \sol solution always exists. 

\begin{theorem}\label{prop: sol}
    Under \cref{assump: rationality}, the \sol concept satisfies the following properties:
    \begin{itemize}
        \item A \sol solution always exists ($\sol \neq \varnothing)$,
        \item A \sol solution can be deployed in a distributed fashion, and
        \item Deviating from a \sol solution achieves each player a worse payoff in the worst case.
    \end{itemize}
    Thus, in a bimatrix game where the victim knows only $A$, but the exploiter knows both $A$ and $B$, under \cref{assump: rationality}, the players will choose a strategy pair $(x^*,y^*) \in \sol$.
\end{theorem}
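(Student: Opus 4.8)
The plan is to verify the three bullet points in turn and then assemble them into the final conclusion. For existence I would apply the Weierstrass extreme value theorem twice. First, the map $x \mapsto \min_{y \in \Delta(m)} x^\top A y$ is a pointwise minimum of linear (hence continuous) functions over the compact set $\Delta(m)$, so it is continuous (indeed concave) on the compact set $\Delta(n)$; thus its set of maximizers $X^*$ is nonempty, and being the preimage of the maximal value under a continuous map it is also closed, hence a compact subset of $\Delta(n)$. Next, because $X^*$ is compact, the map $y \mapsto \min_{x^* \in X^*} x^{*\top} B y$ is well defined (the inner minimum is attained) and is again a continuous, concave function on the compact set $\Delta(m)$, so its maximizer set $\BR(X^*)$ is nonempty. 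Consequently $\sol = X^* \times \BR(X^*)$ is a product of nonempty sets and is nonempty.

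For the distributed-deployment claim, the key observation is that $X^*$ is a function of $A$ alone, which the victim knows, whereas $\BR(X^*)$ is a function of $A$ and $B$, both of which the exploiter knows. Hence the victim can select any $x^* \in X^*$ and the exploiter can independently select any $y^* \in \BR(X^*)$ using only the information available to each, with no communication. Because $\sol$ is defined as the Cartesian product $X^* \times \BR(X^*)$, any independently chosen pair $(x^*,y^*)$ automatically lies in $\sol$; there is no joint constraint coupling the two choices, so no coordination or signaling is required.

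For the deviation property I would exploit that $X^*$ and $\BR(X^*)$ are, by definition, argmax sets. For the victim the relevant worst case (given it knows nothing about $B$) is $\min_{y \in \Delta(m)} x^\top A y$; by \cref{assump: victim} this equals $p_v$ for every $x^* \in X^*$, while for any $x \notin X^*$ it is strictly less than $p_v$, so any deviation strictly lowers the victim's guaranteed payoff. For the exploiter the relevant worst case (given it knows the victim plays in $X^*$ but not which element) is $\min_{x^* \in X^*} x^{*\top} B y$; by \cref{assump: exploiter} this equals $p_e$ for every $y^* \in \BR(X^*)$ and is strictly smaller for any $y \notin \BR(X^*)$, so deviation strictly lowers the exploiter's worst-case payoff. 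Finally, combining \cref{assump: rationality} with \cref{assump: victim} and \cref{assump: exploiter} --- which dictate that a worst-case-rational victim plays in $X^*$ and a worst-case-rational exploiter best-responds within $\BR(X^*)$ --- forces the realized pair to lie in $\sol$.

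I expect the main obstacle to be conceptual rather than technical: pinning down the correct notion of \emph{worst case} faced by each player when contemplating a deviation. The two players optimize against different uncertainty sets --- the victim against all of $\Delta(m)$ and the exploiter against the strategic uncertainty set $X^*$ --- and the deviation argument must invoke the matching inner minimum for each side. The one genuinely technical ingredient is the compactness of $X^*$, which is needed both for the exploiter's inner minimum to be attained and for $\BR(X^*)$ to be nonempty; this follows from continuity of the victim's worst-case objective, but it is worth stating explicitly rather than leaving implicit.
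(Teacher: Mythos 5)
Your proof is correct, but your existence argument takes a genuinely different route from the paper's. The paper does not prove non-emptiness of $X^*$ and $\BR(X^*)$ directly; it defers both to the linear-programming results (\cref{prop: victim-strategy} and \cref{prop: exploiter-strategy}), where $X^*$ is non-empty because the victim's LP attains its optimum, and $\BR(X^*)$ is non-empty because the exploiter's dualized LP is shown to be feasible and bounded, hence solvable. You instead argue analytically: Weierstrass gives a maximizer of the continuous concave map $x \mapsto \min_{y \in \Delta(m)} x^\top A y$, compactness of $X^*$ makes the exploiter's inner minimum attained, and Weierstrass again gives $\BR(X^*) \neq \varnothing$. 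Your route is self-contained (it needs nothing from the LP section) and generalizes to settings where the strategy sets are merely compact rather than simplices; the paper's route gets existence as a free byproduct of machinery it needs anyway for computation, and yields strictly more structure, namely that both sets are (projections of) polytopes with explicit inequality descriptions. The remaining two bullets --- distributed deployment from the product structure and the information split, and strict worst-case loss upon deviation because argmax sets are exactly the level sets of the respective worst-case objectives --- match the paper's argument (its proofs of \cref{assump: victim} and \cref{assump: exploiter}) essentially verbatim. One small imprecision worth fixing: a pointwise infimum of infinitely many continuous functions is in general only upper semicontinuous, not continuous; here this is harmless because upper semicontinuity already suffices for maximum attainment on a compact set (or one can note that the inner minimum over a polytope is achieved at one of finitely many extreme points, restoring genuine continuity), but the justification as you stated it is not literally valid.
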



\paragraph{Expected Payoffs.} We can characterize the payoffs achieved by both players under \sol. Let $(x^*, y^*) \in \sol$. We see the victim's payoff by \cref{assump: victim} is at least as good as the maximin value, $p_v$, but could be even higher depending on the choice of $y^*$. Similarly, the exploiter's payoff by \cref{assump: exploiter} is at least as good as the $X^*$-restricted maximin value, $p_e$, but could be even higher depending on the choice of $x^*$. When the game is zero-sum, the maximin theorem implies that $p_v = p_v^\NE$ is the unique NE payoff. Similarly, the exploiter best responds to $X^*$ and so best responds to some Nash strategy, which gives a Nash strategy for the exploiter. Hence, $p_e = p_e^\NE$ as well from the definition of Nash as mutual best-responses.




\begin{proposition}
    For any $(x^*,y^*) \in \sol$, 
    \begin{equation*}
        {x^*}^\top A y^* \geq p_v, \text{ and } {x^*}^\top B y^* \geq p_e.
    \end{equation*}
    If the game is zero-sum, then:
    \begin{equation*}
        {x^*}^\top A y^* = p_v^{\NE}, \text{ and } {x^*}^\top B y^* = p_e^{\NE}.
    \end{equation*}
\end{proposition}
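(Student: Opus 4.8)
The first pair of inequalities follows directly from unwinding the definitions, so I would dispatch them first and with no appeal to any game structure. Since $x^* \in X^*$, by definition $\min_{y \in \Delta(m)} {x^*}^\top A y = p_v$; evaluating this minimum at the particular choice $y = y^*$ gives ${x^*}^\top A y^* \geq p_v$. Symmetrically, since $y^* \in \BR(X^*)$, by definition $\min_{x \in X^*} x^\top B y^* = p_e$, and because $x^* \in X^*$ is one of the feasible points of this minimum, we obtain ${x^*}^\top B y^* \geq p_e$. These hold for arbitrary payoff matrices $(A,B)$.

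For the zero-sum case ($B=-A$) I would first invoke the Minimax Theorem to set up the equalities on the victim side. It yields a common value $v := \max_{x}\min_{y} x^\top A y = \min_{y}\max_{x} x^\top A y$, identifies $X^*$ with the set of the victim's Nash strategies, and gives $p_v = v = p_v^{\NE}$, together with the existence of an optimal exploiter (minimax) strategy $\bar y$ enjoying the interchangeability property $x^\top A \bar y = v$ for every $x \in X^*$.

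The crux is then to compute $p_e$ and pin down the payoff at $y^*$. The key observation is that every $x \in X^*$ is a maximin strategy, so $x^\top A y \geq v$ for all $y \in \Delta(m)$; hence $\max_{x \in X^*} x^\top A y \geq v$ for every $y$. Rewriting $p_e = \max_{y}\min_{x \in X^*}\!\big({-}x^\top A y\big) = -\min_{y}\max_{x \in X^*} x^\top A y$, the pointwise lower bound shows $p_e \leq -v$, while substituting the optimal exploiter strategy $\bar y$ (for which the inner $\max$ equals $v$ by interchangeability) certifies tightness, so $p_e = -v = p_e^{\NE}$.

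Finally, I would transfer this value to the specific $y^*$. Because $y^* \in \BR(X^*)$ attains $\min_{x \in X^*}\!\big({-}x^\top A y^*\big) = p_e = -v$, we get $\max_{x \in X^*} x^\top A y^* = v$; combined with the pointwise bound $x^\top A y^* \geq v$ for all $x \in X^*$, this forces $x^\top A y^* = v$ for \emph{every} $x \in X^*$, in particular ${x^*}^\top A y^* = v = p_v^{\NE}$ and ${x^*}^\top B y^* = -v = p_e^{\NE}$. I expect the main obstacle to be this zero-sum step, specifically the need for interchangeability to show the worst-case best response $y^*$ realizes the value $v$ against every victim Nash strategy: the direction $p_e \le -v$ is routine, but certifying that the bound is tight (equivalently, that $y^*$ behaves like a Nash strategy when restricted to $X^*$) is where the Minimax Theorem does the real work.
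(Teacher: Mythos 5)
Your proof is correct, and on the first pair of inequalities it matches the paper: both arguments are immediate from the definitions of $X^*$ and $\BR(X^*)$ (Observations 1 and 2). On the zero-sum part you take a genuinely different, and more careful, route. The paper's justification (given only in the text preceding the proposition; there is no appendix proof) argues that the exploiter ``best responds to some Nash strategy, which gives a Nash strategy for the exploiter,'' and then appeals to uniqueness of NE payoffs. That intermediate claim is not sound as stated: a best response to a Nash strategy need not itself be a Nash strategy, and a \sol pair need not be a NE. For instance, if $A$ is the $2\times 2$ identity matrix and $B = -A$, then $X^* = \set{(1/2,1/2)}$ and every $y \in \Delta(2)$ lies in $\BR(X^*)$, so $y^* = e_1$ is a valid \sol strategy that is not minimax, and $(x^*, e_1)$ is not a NE (the victim would deviate to $e_1$); nevertheless the realized payoffs equal the NE values, exactly as the proposition asserts. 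Your argument proves this without the faulty step: you establish $p_e = -v$ by combining the pointwise bound $x^\top A y \geq v$ for all $x \in X^*$ with interchangeability at a minimax strategy $\bar y$, and then the squeeze --- each $x^\top A y^* \geq v$ while $\max_{x \in X^*} x^\top A y^* = v$ --- forces $x^\top A y^* = v$ for \emph{every} $x \in X^*$. That last step is precisely what the statement needs as quantified, since $x^*$ and $y^*$ are chosen independently from $X^*$ and $\BR(X^*)$; your proof delivers it rigorously, whereas the paper's sketch reaches the right conclusion only because the interchangeability argument you spelled out is in fact available.
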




    

\paragraph{Comparison to Stackelberg Strategies.} We can view \sol through the lens of Stackelberg strategies~\citep{Stackelberg, StackCompSecurity}. Our common knowledge assumption functionally announces to the exploiter that the victim will play from $X^*$. However, unlike typical Stackelberg strategies, rather than committing to a single mixed strategy, the victim here is committing to a set of strategies. Consequently, the exploiter now must best respond to an entire set rather than just a single strategy. Thus, \sol strategies can be seen as a generalization of Stackelberg strategies with set-wise commitments and set-wise best responses. In fact, our algorithms in \cref{sec: bimatrix} continue to work so long as the victim commits to a succinctly representable polytope of strategies.

\section{Linear Programs for Computing \sol in Bimatrix Games}\label{sec: bimatrix}

Next, we study the computational complexity of computing \sol solutions. We show both players can efficiently compute a \sol strategy using linear programming. For the victim, the LP is derived analogously to the LP for computing a NE. The exploiter's problem, however, is more challenging since it must best respond to $X^*$, a potentially infinite set of mixed strategies.\footnote{The polytope $X^*$ may have an exponential (in $n,m$) number of extreme points in the worst-case.} Despite its apparent difficulty, we use linear programming duality to derive an LP that computes the exploiter's optimal strategy.

\begin{figure}
    \centering
    \begin{subfigure}[b]{.48\textwidth}
        \centering
        \begin{align*}
        &\max_{x\in \Real^{n}, z\in \Real}\quad  \ z  \\
        \text{s.t. \quad}  &z \leq x^\top A e_j, \quad \forall j \in [m] \\
    	&      1^\top x = 1, \quad x \geq 0.
        \end{align*}
        \caption{Victim's LP}
        \label{equ: victim-LP}
    \end{subfigure}
    \vline
    \begin{subfigure}[b]{.48\textwidth}
        \centering
        \begin{align*}
        &\max_{y, w \in \Real^m, \alpha \in \Real} \quad \ z^*1^\top w - \alpha\\
        \text{s.t. }\quad & \alpha + e_i^\top B y - e_i^\top A w \geq 0 \quad \forall i \in [n]  \\
             & 1^\top y = 1, \quad y \geq 0 \quad w \geq 0.  
        \end{align*}
        \caption{Exploiter's LP}
        \label{equ: exploiter-LP}
    \end{subfigure}
    \caption{\sol LPs}
    \label{fig: LPs}
\end{figure}

\paragraph{Victim's LP.} The victim must compute a security strategy as suggested in \cref{assump: victim}. From the victim's perspective, it faces a zero-sum game with payoff matrix $A$. As such, a security strategy can be computed using the standard LP formulation of computing an NE for this zero-sum game~\citep{NashLP}. This LP can be seen in \cref{equ: victim-LP}.

\begin{proposition}\label{prop: victim-strategy}
    For any optimal solution $(x^*,z^*)$ to LP~\ref{equ: victim-LP}, $x^*$ is a \sol strategy for the victim and $z^* = p_v$ is the victim's guaranteed payoff. Furthermore, the set of all \sol strategies for the victim, $X^*$, is the non-empty polytope:
    \begin{equation*}
        X^* = \set{x \in \Delta(n) \mid z^* \leq x^\top A e_j \ \forall j \in [m]}.
    \end{equation*}
\end{proposition}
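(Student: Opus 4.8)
The plan is to establish three facts about the Victim's LP (\cref{equ: victim-LP}): that it correctly computes the maximin value $p_v$, that its optimal $x^*$ lies in $X^*$, and that the stated polytope characterization of $X^*$ is exact. The backbone is the standard equivalence between the maximin problem $\max_{x}\min_{y} x^\top A y$ and its LP reformulation. First I would recall that for any fixed $x \in \Delta(n)$, the inner minimization $\min_{y \in \Delta(m)} x^\top A y$ is a linear program over the simplex, and its optimum is attained at a vertex, i.e. $\min_{y} x^\top A y = \min_{j \in [m]} x^\top A e_j$. This is the key step that collapses the continuous inner minimization into finitely many linear constraints.

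Next I would show LP~\ref{equ: victim-LP} is exactly the epigraph reformulation of $\max_x \min_j x^\top A e_j$. Introducing the scalar $z$ to represent $\min_j x^\top A e_j$, the constraints $z \le x^\top A e_j$ for all $j$ force $z \le \min_j x^\top A e_j$, and since we maximize $z$ the optimum drives $z$ up to equality $z^* = \min_j x^\top A e_j = \min_y x^\top A y$ at the optimal $x^*$. Combined with the outer maximization over the simplex (encoded by $1^\top x = 1,\ x \ge 0$), this yields $z^* = \max_x \min_y x^\top A y = p_v$, and any optimal $x^*$ attains this maximin value, hence $x^* \in X^* = \argmax_x \min_y x^\top A y$ by \cref{assump: victim}. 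Non-emptiness of $X^*$ follows because the objective is continuous and $\Delta(n)$ is compact, so the maximum is attained.

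For the polytope characterization, I would argue both inclusions. If $x \in \Delta(n)$ satisfies $z^* \le x^\top A e_j$ for all $j$, then $\min_y x^\top A y = \min_j x^\top A e_j \ge z^* = p_v$, so $x$ achieves the maximin value and thus $x \in X^*$. Conversely, if $x \in X^*$ then $\min_j x^\top A e_j = p_v = z^*$, so each $x^\top A e_j \ge z^*$, placing $x$ in the stated set. This gives equality of the two sets; that the set is a polytope is immediate since it is the intersection of the simplex with finitely many halfspaces $\{x : x^\top A e_j \ge z^*\}$.

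The only subtle point — and the one I expect to be the main obstacle — is making the $z^* = p_v$ identification fully rigorous rather than hand-waved: one must verify that maximizing $z$ does not overshoot $\min_j x^\top A e_j$ (ruled out by the constraints) and does not undershoot it (ruled out by optimality, since $z$ could otherwise be increased). Once the epigraph equivalence is stated cleanly, everything else is a direct translation of \cref{assump: victim} into LP language, and the \emph{non-emptiness} and \emph{polytope} claims are routine consequences of compactness and the finite constraint representation.
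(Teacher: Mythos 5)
Your proof is correct and takes essentially the same route the paper relies on: the paper treats this proposition as immediate from the standard LP formulation of security strategies in zero-sum games~\citep{NashLP}, and your argument (collapsing the inner minimization to the simplex vertices, the epigraph reformulation giving $z^* = p_v$, and the two inclusions for the polytope characterization) is exactly that standard argument written out in full. No gaps.
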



\paragraph{Exploiter's LP.} The exploiter computes its worst-case best-response to $X^*$ as suggested in~\cref{assump: exploiter}.  
The inner minimization problem in~\cref{equ: exploiter} amounts to first fixing an exploiter strategy $y \in \Delta(m)$ and then identifying the worst-case victim strategy for the exploiter, $\min_{x \in X^*} x^{\top} B y$. 
Notice the optimization problem involves the implicitly defined set $X^*$.
Our key technical insight is that the exploiter can avoid enumerating $X^*$ by dualizing the problem. Observe that the inner minimization problem can be explicitly written as an LP:
\begin{equation}\label{equ: inter-LP}
\begin{aligned}
\min_{x \geq 0} \quad & \ x^\top B y \\
\text{s.t. } \quad & z^* - x^\top A e_j \leq 0 \quad \forall j \in [m] \\
	     & 1^\top x - 1 = 0,
\end{aligned}
\end{equation}
where $(x^*,z^*)$ is any solution to LP~\ref{equ: victim-LP}, which can be constructed by the exploiter as it knows the victim's payoff matrix $A$. The exploiter should optimize its choice of $y$ by maximizing the value of \cref{equ: inter-LP}, which would result in a max-min problem.
We can convert this problem into a double-max problem by taking the dual of \cref{equ: inter-LP}, given as follows:
\begin{equation}\label{equ: Dual}
    \begin{aligned}
        \max_{w\geq 0, \alpha} \quad & \ z^*1^\top w - \alpha\\
    	\text{s.t. } \quad &
    	\alpha + e_i^\top By - e_i^\top A w \geq 0 \quad \forall i \in [n],
    \end{aligned}
\end{equation}
where $w \in \Real^{m}_{\geq 0}$ and $\alpha \in \Real$ are the dual variables corresponding to the first and second constraints in \cref{equ: inter-LP}, respectively. 
Applying $\max_{y \in \Delta(m)}$ on top of~\cref{equ: Dual}, yields the exploiter's LP, which can be seen in \cref{equ: exploiter-LP}. We give the full derivation in the Appendix.

\begin{proposition}\label{prop: exploiter-strategy}
    For any optimal solution $(y^*,w^*, \alpha^*)$ to LP~\ref{equ: exploiter-LP}, $y^*$ is a \sol strategy for the exploiter and $z^*1^\top w^* - \alpha^* = p_e$ is the exploiter's guaranteed payoff. Furthermore, the set of all \sol strategies for the exploiter, $\BR(X^*)$, is the non-empty projection polytope:
    \begin{equation*}
        \BR(X^*) = \set{y \in \Delta(m) \mid \exists w \in \Real^m_{\geq 0}, \ e_i^\top B y + (z^* 1^\top - e_i^\top A) w \geq p_e \  \forall i \in [n]}.
    \end{equation*}
    In particular, $\BR(X^*)$ contains the non-empty polytope induced by fixing $w = w^*$.
\end{proposition}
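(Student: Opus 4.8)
The plan is to reduce the exploiter's nested max--min problem to a single LP via \emph{linear programming duality}, and then to read off the characterization of $\BR(X^*)$ from the feasibility/optimality structure of that dual. Concretely, for a fixed exploiter strategy $y \in \Delta(m)$, the inner problem $\min_{x \in X^*} x^\top B y$ is the LP~\ref{equ: inter-LP}, whose constraints describe $X^*$ as in \cref{prop: victim-strategy}. The first task is to certify strong duality for this inner LP uniformly over $y$: by \cref{prop: victim-strategy}, $X^*$ is a \emph{non-empty} polytope, so LP~\ref{equ: inter-LP} is feasible, and since $X^* \subseteq \Delta(n)$ is bounded the objective is bounded below. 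Hence strong duality applies and $\min_{x \in X^*} x^\top B y$ equals the optimal value of the dual LP~\ref{equ: Dual}, namely $\max\{\, z^*1^\top w - \alpha : w \ge 0,\ \alpha + e_i^\top B y - e_i^\top A w \ge 0\ \forall i \,\}$.

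With this identity in hand, I would apply $\max_{y \in \Delta(m)}$ to both sides. The left-hand side becomes $\max_{y}\min_{x \in X^*} x^\top B y = p_e$ by the definition of $\BR(X^*)$ in \cref{assump: exploiter}, while the right-hand side, being a maximization of a dual maximization, merges into a single joint maximization over $(y,w,\alpha)$ with $y \in \Delta(m)$, which is exactly LP~\ref{equ: exploiter-LP}. This shows the optimal value of LP~\ref{equ: exploiter-LP} equals $z^*1^\top w^* - \alpha^* = p_e$. To conclude $y^* \in \BR(X^*)$, I would argue that at a joint optimum $(y^*,w^*,\alpha^*)$ the pair $(w^*,\alpha^*)$ must be optimal for the inner dual at $y^*$ (else one could strictly improve the joint objective by re-optimizing $(w,\alpha)$ at the fixed $y^*$); therefore $\min_{x \in X^*} x^\top B y^* = z^*1^\top w^* - \alpha^* = p_e$, so $y^*$ attains the outer maximum and lies in $\BR(X^*)$.

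For the polytope characterization, the key observation is that $y \in \BR(X^*)$ iff $\min_{x \in X^*} x^\top B y = p_e$, and since $p_e$ is the outer maximum this is equivalent to $\min_{x \in X^*} x^\top B y \ge p_e$. By the strong-duality identity above, this holds iff the dual optimum is at least $p_e$, i.e.\ iff there exists a dual-feasible pair $(w,\alpha)$ with $w \ge 0$, $\alpha + e_i^\top B y - e_i^\top A w \ge 0$ for all $i$, and $z^*1^\top w - \alpha \ge p_e$. I would then eliminate $\alpha$: the constraints say precisely that $\max_i\big(e_i^\top A w - e_i^\top B y\big) \le \alpha \le z^*1^\top w - p_e$, and such an $\alpha$ exists iff the lower bound does not exceed the upper bound, i.e.\ iff $e_i^\top B y + (z^*1^\top - e_i^\top A)w \ge p_e$ for all $i$. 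This is the stated system in $(y,w)$, and $\BR(X^*)$ is its projection onto the $y$-coordinates (projecting out $w$ and $\alpha$), hence a polytope. Non-emptiness follows from $y^* \in \BR(X^*)$ proved above. Finally, fixing $w = w^*$ yields the slice $\{y \in \Delta(m) : e_i^\top B y + (z^*1^\top - e_i^\top A)w^* \ge p_e\ \forall i\}$; substituting $\alpha^* = z^*1^\top w^* - p_e$ into the feasibility constraints of LP~\ref{equ: exploiter-LP} shows $y^*$ satisfies them, so this slice is non-empty, and by the characterization every such $y$ lies in $\BR(X^*)$.

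The main obstacle is the $\alpha$-elimination step, i.e.\ verifying the Fourier--Motzkin-style projection that turns the existential statement ``$\exists(w,\alpha)$ dual-feasible with value $\ge p_e$'' into the clean inequality system in $w$ alone; one must keep the direction of each inequality straight and confirm the two-sided bound on $\alpha$ is exactly satisfiable. The only other point requiring care is justifying strong duality \emph{uniformly} in $y$ (feasibility and boundedness of LP~\ref{equ: inter-LP} for every $y$), which follows cleanly from the non-emptiness and boundedness of $X^*$ supplied by \cref{prop: victim-strategy}.
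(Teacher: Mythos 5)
Your proposal is correct and follows essentially the same route as the paper's proof: dualize the inner minimization over $X^*$, merge the outer maximization over $y \in \Delta(m)$ into the single LP~\ref{equ: exploiter-LP}, and obtain $\BR(X^*)$ as the projection of the optimal face onto the $y$-coordinates. Your Fourier--Motzkin elimination of $\alpha$ is exactly the same computation as the paper's substitution of the optimality constraint $z^*1^\top w - \alpha = p_e$ into the feasibility constraints, and your strong-duality justification (feasibility and boundedness of LP~\ref{equ: inter-LP}) matches the paper's feasibility/boundedness argument.
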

Note that the exploiter must first solve LP~\ref{equ: victim-LP} to compute $z^*$. Effectively, the exploiter is using its information advantage to simulate what the victim will do. After computing $z^*$, the exploiter can then compute its strategy $y^*$ using LP~\ref{equ: exploiter-LP}.
\begin{theorem}\label{prop: sol-efficient}
    LP~\ref{equ: victim-LP} has $n+1$ variables and $m+1$ constraints, and LP~\ref{equ: exploiter-LP} has $2m+1$ variables and $n+1$ constraints. Thus, both LPs have polynomially many variables and constraints, so both players can compute their \sol strategy for a general-sum game independently in polynomial time.
\end{theorem}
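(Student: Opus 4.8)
The plan is to prove the statement in two stages: first a direct accounting of the size of each linear program, and then an appeal to the polynomial-time solvability of linear programming.

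First I would read off the dimensions directly from \cref{fig: LPs}. For the victim's LP~\ref{equ: victim-LP}, the decision variables are $x \in \Real^n$ together with the scalar $z \in \Real$, giving $n+1$ variables; the functional constraints are the $m$ inequalities $z \leq x^\top A e_j$ (one per $j \in [m]$) together with the single normalization $1^\top x = 1$, giving $m+1$ constraints (the nonnegativity bounds $x \geq 0$ are not counted as functional constraints). Symmetrically, for the exploiter's LP~\ref{equ: exploiter-LP} the variables are $y \in \Real^m$, $w \in \Real^m$, and $\alpha \in \Real$, for a total of $2m+1$, while the functional constraints are the $n$ inequalities indexed by $i \in [n]$ plus the normalization $1^\top y = 1$, for a total of $n+1$. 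This step is pure bookkeeping and requires no real argument.

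Next I would observe that $n$ and $m$ are part of the description of the bimatrix game $(A,B)$, so all of $n+1$, $m+1$, and $2m+1$ are polynomial in the input size. Assuming the entries of $A$ and $B$ are rational with polynomially bounded bit-length (the standard computational model for LP), both programs have polynomially many variables, polynomially many constraints, and coefficients of polynomial bit-complexity. Invoking a polynomial-time LP solver (e.g.\ the ellipsoid method or an interior-point method), each program can be solved in time polynomial in $n$, $m$, and the bit-length of the data. By \cref{prop: victim-strategy} the optimal $x^*$ of LP~\ref{equ: victim-LP} is a \sol strategy for the victim, and by \cref{prop: exploiter-strategy} the optimal $y^*$ of LP~\ref{equ: exploiter-LP} is a \sol strategy for the exploiter, so solving the LPs indeed yields the desired strategies.

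Finally I would address independence and the only mild subtlety in the argument. The victim computes its strategy by solving the single LP~\ref{equ: victim-LP} from $A$ alone. The exploiter, who knows both $A$ and $B$, first solves LP~\ref{equ: victim-LP} to recover the scalar $z^*$ and then solves LP~\ref{equ: exploiter-LP} with that value substituted in. The point to verify is that chaining the two solves keeps everything polynomial: since $z^*$ is the optimal value of a rational LP, it is itself a rational number whose bit-length is polynomially bounded in the input, so LP~\ref{equ: exploiter-LP} remains a polynomial-size instance and its solution takes polynomial time. Neither player consults the other during its computation, so the two strategies are produced independently. The hard part, to the extent there is one, is precisely this bit-complexity check on $z^*$; everything else is counting and a citation to the standard complexity of linear programming.
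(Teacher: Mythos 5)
Your proof is correct and follows essentially the same route as the paper, which simply declares the theorem immediate from \cref{prop: victim-strategy} and \cref{prop: exploiter-strategy} together with the variable/constraint counts. Your additional check that the optimal value $z^*$ of LP~\ref{equ: victim-LP} has polynomially bounded bit-length (so that chaining it into LP~\ref{equ: exploiter-LP} preserves polynomial time) is a standard but worthwhile detail that the paper leaves implicit.
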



\paragraph{Approximate Rationality.} Although we focus on worst-case rationality, our framework still applies when the victim is willing to take calculated risks. Suppose the victim will accept any $\epsilon$-maximin strategy for some $\epsilon \geq 0$, i.e. any strategy $x^*$ satisfying $\min_{y} {x^*}^\top A y \geq \max_{x} \min_y  {x}^\top A y - \epsilon$. Clearly, the victim's approximate maximin set is still a polytope and is given by,
\begin{equation*}
    X^*_{\epsilon} = \set{x \in \Delta(n) \mid z^* - \epsilon \leq x^\top A e_j \ \forall j \in [m]}.
\end{equation*}
We then see the best-response for the exploiter to the victim's set of strategies takes the same form; the only difference is the objective function of LP~\ref{equ: exploiter-LP} is changed to be $(z^*-\epsilon)1^\top w - \alpha$. 

 


\section{Algorithms for Computing \sol in Markov Games}\label{sec: markov}

We can extend the techniques above to compute \sol solutions for Markov games. A naive approach is to simply use the techniques from \cref{sec: bimatrix} directly on the bimatrix representation of the Markov game. However, this approach is computationally inefficient since the bimatrix representation is exponentially sized in general. 
In fact, general \sol solutions to a Markov game, similar to general NE, consist of history-dependent policies --- simply writing down these policies requires space exponential in the horizon $H$. 
Consequently, as traditionally done in the Markov game literature, we focus on computing Markovian policies for both players, which can be implemented in practice. To this end, we require a slightly stronger assumption on the players' rationality.

\begin{assumption}\label{assump: markovian}
    Both players are worst-case rational in every stage game and this is common knowledge.
\end{assumption}
In words, the players not only wish to optimize their overall worst-case payoff, but they also wish to do well against the stage game they may face at any moment in time.
This leads the players to computing even more structured solutions: namely, Markov-perfect \sol solutions. 

\paragraph{\mpsol Solutions.} The \sol solutions for the stage game $(h,s)$ are defined as,
\begin{equation*}
    \Pi_{hs}^* := \argmax_{\pi \in \Pi} \min_{\nu \in N} V^{\pi,\nu}_{hv}(s) \text{  and  } N_{hs}^* := \argmax_{\nu \in N} \min_{\pi^* \in \Pi_{hs}^* \cap \bigcap_{k > h, s'} \Pi^*_{ks'}} V^{\pi^*,\nu}_{he}(s).
\end{equation*}
The definition of $\Pi^*_{hs}$ is standard; however, notice the argument of the minimum in $N^*_{hs}$ reflects the exploiter's knowledge that the victim must be optimal at every future stage, not just at stage $(h,s)$.
We define a policy pair $(\pi^*, \nu^*)$ to be a \emph{Markov-perfect \sol} (\mpsol) solution if $(\pi^*, \nu^*)$ is a \sol solution for every stage game, i.e. $\pi^* \in \Pi^*_{h,s}$ and $\nu^* \in N^*_{h,s}$ for all $h,s$. 

\begin{definition}(\mpsol)
    \begin{equation*}
        \mpsol := \bigcap_{h,s} \Pi_{hs}^* \times \bigcap_{h,s} N^*_{hs}.
    \end{equation*}
\end{definition}
Similar to Markov-perfect NE, a \mpsol solution always exists, is a pair of Markovian policies, and is a \sol solution, so inherits all the properties we saw from \cref{prop: sol}. 

\begin{proposition}\label{prop: mp=viser}
    Under \cref{assump: markovian}, $\varnothing \neq \mpsol \subseteq \sol$ and so players will play some $(\pi^*,\nu^*) \in \mpsol$.
\end{proposition}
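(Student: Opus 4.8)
The plan is to prove the three asserted facts in order: non-emptiness of $\mpsol$, the inclusion $\mpsol\subseteq\sol$, and the behavioral conclusion. Throughout I treat the victim's decision problem as a finite-horizon zero-sum stochastic game (the victim reasons as if $B=-A$ at every stage), so I may invoke the standard facts for such games: the maximin value functions $V^*_{hv}(s)=\max_\pi\min_\nu V^{\pi,\nu}_{hv}(s)$ are well defined, computed by backward induction, and attained by a Markovian policy that is worst-case optimal from every $(h,s)$.

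\emph{Non-emptiness.} I would argue by backward induction from $h=H$ down to $h=1$, reducing each stage to the bimatrix results of \cref{sec: bimatrix}. Having constructed continuation value functions $V^*_{h+1,v}$ (victim) and $W_{h+1,e}$ (the exploiter's secured value), I form at each $(h,s)$ the stage payoff matrices $\tilde A_{hs}=R_{hv}(s)+\sum_{s'}P(s'\mid s,\cdot,\cdot)V^*_{h+1,v}(s')$ and $\tilde B_{hs}=R_{he}(s)+\sum_{s'}P(s'\mid s,\cdot,\cdot)W_{h+1,e}(s')$. \cref{prop: victim-strategy} applied to $\tilde A_{hs}$ gives a non-empty maximin polytope of stage strategies, and choosing one such strategy at every $(h,s)$ yields a Markovian policy in $\bigcap_{h,s}\Pi^*_{hs}$. \cref{prop: exploiter-strategy} applied to $(\tilde A_{hs},\tilde B_{hs})$ gives a non-empty worst-case best-response set; choosing a member at every $(h,s)$ yields a Markovian $\nu^*$ and defines $W_{he}(s)$, closing the recursion. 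A one-shot-deviation argument then certifies that the assembled policies lie in $\bigcap_{h,s}\Pi^*_{hs}$ and $\bigcap_{h,s}N^*_{hs}$, so $\mpsol\neq\varnothing$.

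\emph{Inclusion.} The victim half is the easier direction: a policy $\pi^*$ that is stage-maximin at every $(h,s)$ secures $\min_\nu V^{\pi^*,\nu}_{hv}(s)=V^*_{hv}(s)$ for all $(h,s)$ by backward induction, hence $\min_\nu V^{\pi^*,\nu}_v=\E_{s\sim\mu}\brac{V^*_{1v}(s)}=\max_\pi\min_\nu V^{\pi,\nu}_v$, so $\pi^*\in X^*$ (using that the exploiter's value-minimizing response decomposes across initial states). The exploiter half, $\bigcap_{h,s}N^*_{hs}\subseteq\BR(X^*)$, is the crux and the step I expect to be the main obstacle. I would set up a Bellman recursion for the exploiter's restricted worst-case best-response value, show the backward-induction policy $\nu^*$ attains it at stage $1$, and then argue that best-responding stage-wise to the Markov-perfect victim set $\bigcap_{h,s}\Pi^*_{hs}$ coincides with best-responding to the commitment set the victim is actually known to use. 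The subtlety is precisely that the raw stage-$1$ maximin set $X^*$ can contain policies that are victim-suboptimal off the worst-case path (and even history-dependent), and such policies could in principle hurt the exploiter under $\nu^*$; reconciling this requires a reachability/principle-of-optimality argument showing the worst-case victim response in $X^*$ to $\nu^*$ is realized by a Markov-perfect policy. This is exactly where \cref{assump: markovian} is essential: it pins the victim to its stage-wise-rational (Markov-perfect) set rather than merely to $X^*$, which is the fact that makes the stage-wise and global best responses agree.

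\emph{Behavioral conclusion.} Finally I would apply the reasoning of \cref{prop: sol} stage-wise: under \cref{assump: markovian} both players are worst-case rational in every stage game, so the victim selects some $\pi^*\in\bigcap_{h,s}\Pi^*_{hs}$ and the exploiter some $\nu^*\in\bigcap_{h,s}N^*_{hs}$; non-emptiness guarantees the choice is feasible and the inclusion guarantees it is consistent with the overall \sol prediction. The whole argument mirrors the classical fact that a Markov-perfect equilibrium is a Nash equilibrium, with the added difficulty, noted above, of aligning the exploiter's set-valued best response across the stage-wise and global views.
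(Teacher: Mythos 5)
Your overall plan coincides with the paper's own proof in all but one step. The paper also proceeds by backward induction: the stage matrices you call $\tilde A_{hs},\tilde B_{hs}$ are exactly its $Q^*_{hv}(s),Q^*_{he}(s)$ from \cref{equ: Q*}; non-emptiness of each stage set comes from \cref{prop: victim-strategy} and \cref{prop: exploiter-strategy}; your ``Bellman recursion for the exploiter's restricted worst-case best-response value'' is precisely the paper's \cref{lem: recursion}; and your ``one-shot-deviation'' certification that stage-wise choices assemble into members of $\bigcap_{h,s}\Pi^*_{hs}$ and $\bigcap_{h,s}N^*_{hs}$ is the product-structure statement proved in \cref{lem: victim} and \cref{cor: solutions}. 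The victim half of the inclusion (stage-wise maximin at every initial state implies global maximin, because the exploiter's minimizing response decomposes across initial states) is also the paper's argument.

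The step of yours that would fail is the proposed ``reachability/principle-of-optimality argument showing the worst-case victim response in $X^*$ to $\nu^*$ is realized by a Markov-perfect policy.'' That claim is false in general. Take $H=2$: at $s_0$ the victim has one action and the exploiter picks L (rewards: victim $0$, exploiter $1$; move to $s_1$) or R (rewards: victim $2$, exploiter $0$; move to $s_2$); $s_1$ has single actions with rewards (victim $1$, exploiter $0$); at $s_2$ the victim picks U (rewards: victim $1$, exploiter $3$) or D (rewards: victim $0$, exploiter $0$). Every victim policy is stage-$1$ maximin, since the R branch hands the victim $2\geq 1$ regardless of its play at $s_2$; so $X^*$ is \emph{all} policies, while the Markov-perfect set forces U at $s_2$. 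The exploiter's Markov-perfect choice is R (worth $3$ against U), but the worst case over $X^*$ against R is the non-Markov-perfect policy D, worth $0$, and the best response to $X^*$ is L. So the worst-case element of $X^*$ against $\nu^*$ need not be Markov perfect, and no reachability argument can repair this --- indeed, under the literal ``min over $X^*$'' reading, R is Markov-perfect for the exploiter yet lies outside $\BR(X^*)$, so the inclusion itself would fail. The actual resolution is the one in your closing sentence, which is also how the paper operates: under \cref{assump: markovian} the set the victim is known to play \emph{is} $\bigcap_{h,s}\Pi^*_{hs}$, the exploiter's sets $N^*_{hs}$ are defined as worst-case best responses to that set, and the $\sol$ object being compared against must be read the same way; the inclusion then follows from the decomposition over initial states alone, with no reachability argument anywhere in the paper. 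Drop the reachability step --- it is not merely the hard part, it is untrue --- and let the definitions under \cref{assump: markovian} do that work.
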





\begin{figure}
\begin{multicols}{2}
\begin{center}
\begin{algorithm}[H]
\caption{\textsc{Victim}}\label{alg: victim}
\SetAlgoLined
\KwIn{Markov Game $G$ without $\{R_{he}\}_h$}
 \For{$h = H$ down to $1$}{
    \For{$s \in S$}{
        $Q^*_{hv}(s) \gets$ \cref{equ: Q*} \\
        $\pi_h^*(s), V_{hv}^*(s) \gets$ LP~\ref{equ: victim-LP}($Q^*_{hv}(s)$)
    }
 }
\ \\
\ \\
\Return $\pi^* = \{\pi_h^*(s)\}_{h,s}$
\end{algorithm}
\end{center}
\hfill
\begin{center}
\begin{algorithm}[H]
\caption{\textsc{Exploiter}}\label{alg: exploiter}
\SetAlgoLined
\KwIn{Markov Game $G$}
 \For{$h = H$ down to $1$}{
    \For{$s \in S$}{
        $Q^*_{hv}(s), Q^*_{he}(s) \gets$ \cref{equ: Q*} \\
        $\pi_h^*(s), V_{hv}^*(s) \gets$ LP~\ref{equ: victim-LP}($Q^*_{hv}(s)$) \\
        $\nu_h^*(s), w_h^*(s), \alpha_h^*(s)$ $\gets$ \\ LP~\ref{equ: exploiter-LP}($Q^*_{hv}(s)$, $Q^*_{he}(s)$)
    }
 }
 \Return $\nu^* = \{\nu_h^*(s)\}_{h,s}$
\end{algorithm}
\end{center}
\end{multicols}
\caption{\sol Algorithms}
\end{figure}

\paragraph{Algorithms.} In the same spirit as Nash-VI~\citep{Markov, survey, NashVI}, both players can use backward induction: each can solve their respective LP for the bimatrix stage games backwards in time. 
For any $s \in S$, the stage game at the final time step is just the last bimatrix game: $Q^*_{Hu}(s) = R_{hu}(s)$.
For each $h < H,s \in S,u \in \{v,e\}$, the bimatrix stage game, $Q^*_{hu}(s)$, is defined by,
\begin{equation}\label{equ: Q*}
    Q^*_{hu}(s)[a_v,a_e] = \brac{R_{hu}(s,a_v,a_e) + \sum_{s'} P_h(s' \mid s, a_v, a_e) V^*_{(h+1)u}(s')},
\end{equation}
where $V_{(h+1)v}^*(s')$ is the maximin value induced by $\Pi^*_{(h+1)s'}$, and 
and $V_{(h+1)e}^*(s')$ is the constrained maximin value induced by $N^*_{(h+1)s'}$.
Overall, for $h$ in backward order and all $s$, each player computes $Q^*_{hu}(s)$ using \cref{equ: Q*} and then solves their bimatrix LP with input matrices $A = Q^*_{hv}(s)$ and $B = Q^*_{he}(s)$. The product policy constructed from all stages yields a \mpsol for each player.

\paragraph{Victim's Algorithm.} The victim needs to compute a maximin for each stage game. 
For each $h,s$ in backward order, the victim computes $Q^*_{hv}(s)$ using \cref{equ: Q*} and the previously computed future stage values, $V^*_{(h+1)v}(s')$. Then, it solves LP~\ref{equ: victim-LP} with input matrix $A = Q^*_{hv}(s)$ to compute a strategy $\pi^*_h(s)$ and the current stage value $V^*_{hv}(s)$. The total product policy $\pi^* = \{\pi^*_h(s)\}_{h,s}$ is a \mpsol for the victim. The victim's full algorithm can be seen in \cref{alg: victim}. The correctness of \cref{alg: victim} relies critically on the fact that the Nash value of each zero-sum stage game, and hence the maximin value of each stage game, is unique. 

\begin{proposition}\label{prop: victim-mstrategy}
    For any output $\pi^*$ of \cref{alg: victim}, $\pi^*$ is a \mpsol strategy for the victim, and $\sum_{s} \mu(s) V_{1v}^*(s) = p_v$ is the victim's guaranteed payoff for the game. Furthermore, the set of all \sol strategies for the victim in the stage game $(h,s)$ is the non-empty polytope,
    \begin{equation*}
    \hat \Pi_{hs} := \set{\pi_h(s) \in \Delta(n) \mid V_{hv}^*(s) \leq \pi_h(s)^\top Q^*_{hv}(s) e_j \ \forall j \in [m]},
    \end{equation*}
    and the set of \mpsol policies for the game is the non-empty polytope: $\bigcap_{h,s} \Pi^*_{hs} = \bigtimes_{h,s} \hat \Pi_{hs}$. 
\end{proposition}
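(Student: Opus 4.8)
The plan is to argue by backward induction on $h$, using \cref{prop: victim-strategy} as the per-stage workhorse and the Minimax Theorem to guarantee that the stage games are well defined. The main content is a product decomposition of the Markov-perfect maximin set into single-stage maximin sets, which is where the uniqueness of the stage value becomes essential.

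First I would establish, by induction from $h = H$ down to $h = 1$, that each iteration of \cref{alg: victim} behaves as claimed. The inductive hypothesis is that for every $s'$ the value $V^*_{(h+1)v}(s')$ is the \emph{unique} maximin value of the zero-sum bimatrix game $Q^*_{(h+1)v}(s')$. Because this value is uniquely determined (the Minimax Theorem forces every maximin strategy of a zero-sum game to attain the same value), the matrix $Q^*_{hv}(s)$ in \cref{equ: Q*} is a well-defined bimatrix, independent of which future maximin strategies the victim happens to adopt; this is exactly the point flagged before the proposition, since without uniqueness the recursion would not even yield a single stage game. Applying \cref{prop: victim-strategy} to the zero-sum game with payoff matrix $A = Q^*_{hv}(s)$ then immediately gives that LP~\ref{equ: victim-LP} returns a maximin strategy $\pi^*_h(s)$ together with the maximin value $V^*_{hv}(s)$, that this value is unique (closing the induction), and that the set of all maximin strategies of this stage game is precisely the non-empty polytope $\hat\Pi_{hs}$. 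This settles the stage-game characterization and the non-emptiness of each $\hat\Pi_{hs}$.

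Next I would prove the product decomposition $\bigcap_{h,s}\Pi^*_{hs} = \bigtimes_{h,s}\hat\Pi_{hs}$. The key lemma is a Bellman recursion for the worst-case value of a fixed Markovian policy: since the exploiter faces an MDP when $\pi$ is fixed, its worst-for-the-victim response may be taken Markovian and computed by backward induction, giving
\begin{equation*}
\min_{\nu \in N} V^{\pi,\nu}_{hv}(s) = \min_{\nu_h(s) \in \Delta(m)} \pi_h(s)^\top \left[ R_{hv}(s) + \sum_{s'} P_h(s' \mid s)\left(\min_{\nu \in N} V^{\pi,\nu}_{(h+1)v}(s')\right) \right] \nu_h(s).
\end{equation*}
For the inclusion $\supseteq$, a downward induction shows that if $\pi_k(s') \in \hat\Pi_{ks'}$ for all $(k,s')$, then $\min_\nu V^{\pi,\nu}_{hv}(s) = V^*_{hv}(s)$ at every $(h,s)$, so $\pi$ attains the maximin at every stage and hence lies in every $\Pi^*_{hs}$. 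For the reverse inclusion $\subseteq$, if $\pi \in \bigcap_{h,s}\Pi^*_{hs}$ then in particular $\min_\nu V^{\pi,\nu}_{(h+1)v}(s') = V^*_{(h+1)v}(s')$ for all $s'$; substituting this into the recursion collapses the bracket to $Q^*_{hv}(s)$, so $\min_\nu V^{\pi,\nu}_{hv}(s) = \min_{\nu_h(s)} \pi_h(s)^\top Q^*_{hv}(s)\nu_h(s) = V^*_{hv}(s)$, which forces $\pi_h(s)$ to be stage-game maximin, i.e. $\pi_h(s) \in \hat\Pi_{hs}$. Combining the inclusions, and recalling each factor is a non-empty polytope, gives the claimed product polytope; in particular any output $\pi^*$ of \cref{alg: victim}, whose stage strategies lie in each $\hat\Pi_{hs}$, is a \mpsol strategy for the victim.

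Finally, for the payoff identity $\sum_s \mu(s) V^*_{1v}(s) = p_v$ I would separate the overall maximin value over the initial distribution. Because a single Markovian worst-case response can be chosen to be simultaneously worst for every initial state (the minimizing $\nu_h(s)$ at each $(h,s)$ does not depend on the realized start state), the minimization passes inside $\mu$, so $\min_\nu \sum_s \mu(s) V^{\pi,\nu}_{1v}(s) = \sum_s \mu(s)\min_\nu V^{\pi,\nu}_{1v}(s)$ for any $\pi$; optimizing over $\pi$ and using $\min_\nu V^{\pi,\nu}_{1v}(s) \le V^*_{1v}(s)$ with equality at $\pi^*$ yields the identity. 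The main obstacle I anticipate is the decomposition step: carefully justifying the Bellman recursion for the worst-case value against an adaptive exploiter and using it to turn global stage-maximin optimality into the clean product structure, all while leaning on the uniqueness of each stage value to keep $Q^*_{hv}(s)$, and hence the entire induction, well defined.
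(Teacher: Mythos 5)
Your proposal is correct and takes essentially the same route as the paper: backward induction that applies \cref{prop: victim-strategy} to each stage game $Q^*_{hv}(s)$ (with uniqueness of the zero-sum stage value keeping the recursion well defined), plus the product decomposition $\bigcap_{h,s}\Pi^*_{hs} = \bigtimes_{h,s}\hat\Pi_{hs}$ of the Markov-perfect maximin set. The only difference is one of packaging: the paper defers that decomposition (\cref{lem: victim}) to standard zero-sum Markov game theory, i.e.\ a specialization of its exploiter-side argument (\cref{lem: recursion} and \cref{cor: solutions}), whereas you prove it directly via the fixed-policy Bellman recursion and the two set inclusions --- the same backward-induction content written out explicitly.
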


\paragraph{Exploiter's Algorithm.} The exploiter needs to compute a worst-case best-response for each stage game. First, the exploiter can run \cref{alg: victim} to compute the optimal values $V^*_{hv}(s)$ that it can use to compute the matrix $Q^*_{hv}(s)$ through \cref{equ: Q*}.  Note, the exploiter can accurately compute the victim's $Q^*_{hv}(s)$ since the maximin value of each future-time stage game, $V_{(h+1)v}^*(s')$, is unique. 
Next, for each $h,s$ in backward order, it computes $Q^*_{he}(s)$ using \cref{equ: Q*} and the already computed stage values $V^*_{(h+1)s'}$. Then, using $V_{hv}^*(s)$, it solves LP~\ref{equ: exploiter-LP} for the bimatrix game $(Q_{hv}(s),Q_{he}(s))$ to compute a strategy $\nu_h^*(s)$. 
The total product policy $\nu^* = \{\nu^*_h(s)\}_{h,s}$ is a \mpsol for the victim. In fact, as the victim's algorithm is also recursive, the exploiter can simultaneously compute both $Q^*_{hv}(s)$ and $Q^*_{he}(s)$ in a single loop.
The exploiter's algorithm is summarized in~\cref{alg: exploiter}.


\begin{proposition}\label{prop: exploiter-mstrategy}
    For any output $\nu^*$ of \cref{alg: victim}, $\nu^*$ is a \mpsol strategy for the exploiter, and $\sum_{s} \mu(s) V_{1e}^*(s) = p_e$ is the exploiter's guaranteed payoff for the game. Furthermore, the set of all \sol strategies for the exploiter in the stage game $(h,s)$ is the non-empty projection polytope:
    \begin{equation*}
    \hat N_{hs} := \bigcup_{w\in \Real^m_{\geq 0}}\set{\nu_h(s) \in \Delta(m) \vert e_i^\top Q^*_{he}(s) \nu_h(s) + (V_{hv}^*(s) 1^\top - e_i^\top Q^*_{hv}(s)) w \geq p_{he}(s) \  \forall i \in [n]}.
    \end{equation*}
    and the set of \mpsol policies for the game is the non-empty polytope: $\bigcap_{h,s} N^*_{hs} = \bigtimes_{h,s} \hat N_{hs}$.
\end{proposition}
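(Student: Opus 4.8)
The plan is to mirror the proof of \cref{prop: victim-mstrategy} and reduce the Markov problem to a backward sequence of bimatrix \sol instances, each handled by \cref{prop: exploiter-strategy}. I would argue by backward induction on $h$. The base case $h=H$ is the bimatrix game $Q^*_{He}(s) = R_{He}(s)$ played against the victim's stage set $\hat\Pi_{Hs}$, to which \cref{prop: exploiter-strategy} applies verbatim. For the inductive step I first observe that, because the victim's maximin value $V^*_{(h+1)v}(s')$ of every future stage is \emph{unique} (\cref{prop: victim-mstrategy}), the exploiter can reconstruct $Q^*_{hv}(s)$ exactly via \cref{equ: Q*}, and it forms $Q^*_{he}(s)$ from the already-determined future exploiter values $V^*_{(h+1)e}(s')$. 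Hence at each $(h,s)$ the exploiter genuinely faces the bimatrix game $(Q^*_{hv}(s), Q^*_{he}(s))$.

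The heart of the argument is a Bellman decomposition of the exploiter's constrained-maximin objective. Fixing the exploiter's Markovian policy $\nu$, I expand
\begin{equation*}
V^{\pi^*,\nu}_{he}(s) = \pi_h^*(s)^\top \brac{R_{he}(s) + \sum_{s'} P_h(s' \mid s, \cdot, \cdot) V^{\pi^*,\nu}_{(h+1)e}(s')} \nu_h(s).
\end{equation*}
Invoking the product characterization $\bigcap_{h,s}\Pi^*_{hs} = \bigtimes_{h,s}\hat\Pi_{hs}$ of \cref{prop: victim-mstrategy}, restricted to times exceeding $h$, shows the constrained victim set $\Pi^*_{hs} \cap \bigcap_{k>h,s'}\Pi^*_{ks'}$ factorizes: its stage-$(h,s)$ component ranges over $\hat\Pi_{hs}$, while at time $h+1$ every state $s'$ is forced into $\Pi^*_{(h+1)s'}\cap\bigcap_{k>h+1,s''}\Pi^*_{ks''}$, which is exactly the constraint defining $N^*_{(h+1)s'}$. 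By the Markov property, the continuation value $V^{\pi^*,\nu}_{(h+1)e}(s')$ depends only on the subgame rooted at $(h+1,s')$, so the worst-case constrained victim minimizes each successor subgame independently; by the induction hypothesis, against the exploiter's \mpsol continuation $\nu^*$ this yields precisely $V^*_{(h+1)e}(s')$. Substituting these values collapses the subgame into the single bimatrix \sol problem $\max_{\nu_h(s)}\min_{\pi_h^*(s)\in\hat\Pi_{hs}} \pi_h^*(s)^\top Q^*_{he}(s)\nu_h(s)$ on $(Q^*_{hv}(s), Q^*_{he}(s))$.

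With the reduction in hand, \cref{prop: exploiter-strategy} applied to this stage game shows that $\nu^*_h(s)$ returned by LP~\ref{equ: exploiter-LP} lies in $N^*_{hs}$, that the optimal value equals $V^*_{he}(s) = p_{he}(s)$, and that the full set of stage-optimal exploiter responses is exactly the projection polytope $\hat N_{hs}$ of the statement. Because stage choices at distinct $(h,s)$ enter the decomposed objective independently, a policy lies in every $N^*_{hs}$ iff each of its stage components lies in the corresponding $\hat N_{hs}$, giving $\bigcap_{h,s} N^*_{hs} = \bigtimes_{h,s}\hat N_{hs}$, with nonemptiness inherited from the stage-wise nonemptiness supplied by \cref{prop: exploiter-strategy}. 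Unrolling the stage values through $h=1$ and averaging against $\mu$ yields $\sum_s \mu(s) V^*_{1e}(s) = p_e$, and since $\nu^* \in N^*_{hs}$ for all $(h,s)$ we conclude $\nu^* \in \mpsol$, which is a \sol strategy by \cref{prop: mp=viser}.

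The main obstacle is justifying the Bellman decomposition of the inner minimization. I must argue that minimizing the exploiter's value over the constrained victim set reduces, state by state, to independently minimizing each successor subgame value; this relies on the dynamic-programming principle together with the path-independence of Markovian continuation values and the nonnegativity of the transition weights $P_h(s'\mid s,\cdot,\cdot)$. I also must verify that the stage-$(h,s)$ restriction of $\Pi^*_{hs}\cap\bigcap_{k>h,s'}\Pi^*_{ks'}$ coincides with the victim's stage set $\hat\Pi_{hs}$, and that the residual future constraint matches the one defining $N^*_{(h+1)s'}$, so that the induction hypothesis delivers exactly $V^*_{(h+1)e}(s')$ rather than a differently constrained quantity.
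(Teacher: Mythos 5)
Your proposal is correct and follows essentially the same route as the paper: the Bellman decomposition of the exploiter's constrained maximin value, which you rightly identify as the main obstacle, is precisely the paper's \cref{lem: recursion}, and your stage-wise factorization of the constrained victim set and of the exploiter's solution sets is exactly \cref{lem: victim} and \cref{cor: solutions}. Both arguments then reduce each stage game $(Q^*_{hv}(s), Q^*_{he}(s))$ to LP~\ref{equ: exploiter-LP} via \cref{prop: exploiter-strategy} and conclude nonemptiness and the product form $\bigcap_{h,s} N^*_{hs} = \bigtimes_{h,s}\hat N_{hs}$ just as you describe.
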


Observe the computations the players must perform in \cref{alg: victim} and \cref{alg: exploiter} can be carried out independently and using only the information available to them. We also note the algorithms' runtimes are dominated by the time to solve $O(H|S|)$ LPs. Each individual LP has polynomially many variables and constraints with respect to the size of the original Markov game, so can be solved in polynomial time. As there are polynomially many such LPs, each player's algorithm runs in polynomial time.

\begin{theorem}\label{thm: mpsol-efficient}
    Both players can independently compute a \mpsol policy for a general-sum Markov game in polynomial time. 
\end{theorem}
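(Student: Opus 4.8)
The plan is to reduce \Cref{thm: mpsol-efficient} to a counting-and-complexity argument built on the per-stage LP guarantees already established in \Cref{prop: victim-mstrategy} and \Cref{prop: exploiter-mstrategy}, together with the single-game efficiency bound from \Cref{prop: sol-efficient}. First I would observe that \Cref{alg: victim} and \Cref{alg: exploiter} each consist of a double loop over $h \in [H]$ and $s \in S$, so the body executes exactly $H|S|$ times; correctness of the output policies as \mpsol solutions is already supplied by the two propositions, so what remains is purely a runtime accounting.

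\textbf{Main structure.} Fix an iteration $(h,s)$. The work done is: (i) assembling the stage matrices $Q^*_{hv}(s)$ and (for the exploiter) $Q^*_{he}(s)$ via \Cref{equ: Q*}; and (ii) solving LP~\ref{equ: victim-LP} and, for the exploiter, LP~\ref{equ: exploiter-LP}. For step (i), each of the $nm$ entries of $Q^*_{hu}(s)$ is a reward plus an expectation over next states, costing $O(|S|)$ arithmetic operations per entry, hence $O(nm|S|)$ per stage; this uses only the previously computed values $V^*_{(h+1)u}(s')$, which are available by backward induction. For step (ii), I would invoke \Cref{prop: sol-efficient}: with $A = Q^*_{hv}(s)$ and $B = Q^*_{he}(s)$, LP~\ref{equ: victim-LP} has $n+1$ variables and $m+1$ constraints while LP~\ref{equ: exploiter-LP} has $2m+1$ variables and $n+1$ constraints, so both are of size polynomial in $n$ and $m$ and solvable in polynomial time by any standard LP algorithm (e.g. an interior-point or ellipsoid method).

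\textbf{Aggregation.} Multiplying the per-stage cost by the $H|S|$ iterations gives a total runtime polynomial in $H$, $|S|$, $n$, and $m$, i.e.\ polynomial in the size of the input Markov game $G = (S, A, R, P, H, \mu)$. The crucial independence claim — that each player carries out these computations using only its own available information — follows because the victim's loop references only $A = Q^*_{hv}(s)$, which depends on $\{R_{hv}\}_h$, $P$, and the recursively computed maximin values; and the exploiter can reconstruct every $Q^*_{hv}(s)$ exactly, since \Cref{prop: victim-mstrategy} guarantees each future maximin value $V^*_{(h+1)v}(s')$ is \emph{unique} (being the Nash value of a zero-sum stage game). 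I would emphasize this uniqueness point explicitly, because it is what licenses the exploiter to simulate the victim's backward induction without ambiguity and thereby compute $Q^*_{he}(s)$ correctly.

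\textbf{Expected obstacle.} The statement is essentially a corollary of the preceding propositions, so no deep new idea is needed; the only subtlety — and the step I would be most careful about — is verifying that the backward-induction values feeding \Cref{equ: Q*} are well-defined and efficiently computable at every stage. Since general-sum stage games can have multiple equilibria with differing values, a naive recursion could be ill-posed; here the victim's values are pinned down by the Minimax Theorem (zero-sum uniqueness), and the exploiter's values $V^*_{he}(s')$ are the \emph{constrained} maximin values induced by $N^*_{(h+1)s'}$, which \Cref{prop: exploiter-mstrategy} certifies are attained by the LP~\ref{equ: exploiter-LP} optimum. I would therefore close the proof by noting that each $V^*_{(h+1)u}(s')$ is the optimal objective value of a polynomial-size LP already solved in a later-indexed iteration, so the recursion terminates with polynomially many polynomial-size LP solves in total, establishing the claimed polynomial overall runtime for both players.
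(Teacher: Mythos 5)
Your proof is correct and takes essentially the same route as the paper: the paper likewise derives \cref{thm: mpsol-efficient} immediately from \cref{prop: victim-mstrategy} and \cref{prop: exploiter-mstrategy} for correctness, combined with the observation that each algorithm's runtime is dominated by solving $O(H|S|)$ LPs of polynomial size (via \cref{prop: sol-efficient}), and that independence holds because the victim uses only its own payoffs while the exploiter can simulate the victim's backward induction thanks to the uniqueness of the zero-sum maximin stage values. Your explicit accounting of the $O(nm|S|)$ cost of assembling each $Q^*_{hu}(s)$ is a minor elaboration the paper leaves implicit.
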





\section{Experiments}\label{sec: experiments}

We illustrate \sol on a generalization of the game from \cref{table: exploiter-game} and uniformly random games\footnote{Code is available at \url{https://github.com/jermcmahan/VISER}.}. Let $A$ and $B$ denote the payoff matrices from \cref{table: exploiter-game} with choice of $x = 10$ and $y = 10$. We consider the block-diagonal generalizations (with $r$ blocks) defined by,
\begin{equation*}
    \hat A = \begin{pmatrix}
        A & 0 & \ldots & 0 \\
        0 & A & \ldots & 0 \\
        \vdots & \vdots & \ldots & \vdots \\
        0 & 0 & \ldots & A 
    \end{pmatrix}
    \text{ and } 
    \hat B = \begin{pmatrix}
        B & 0 & \ldots & 0 \\
        0 & B & \ldots & 0 \\
        \vdots & \vdots & \ldots & \vdots \\
        0 & 0 & \ldots & B 
    \end{pmatrix}
\end{equation*}
It is easy to verify the form of \sol solution for $(\hat A, \hat B)$. The victim chooses either M or U in each block and evenly mixes between them. Formally, the victim's strategy $x$ must satisfy $x_{3i} = 0$, $x_{3i-2} + x_{3i-1} = 1/r$, $x_{3i-2} \not = x_{3i-1}$, and $x_i \in \{0, 1/r\}$. The payoff the victim receives is then $p_v = 10/r$. The exploiter's best response to this set of strategies is any odd column, which achieves it payoff of at least $p_e = 10/r$.

We then extend these bimatrix games into Markov games. We define $S = [10]$, $A = [n] \times [m]$, and $H = 10$. The reward matrices are exactly $\hat A$ and $\hat B$ at each step: $R_{hv}(s) = \hat A$ and $R_{he}(s) = \hat B$ for all $h, s$. For simplicity, we equip the game with uniform transitions $P(s' \mid s, a, b) = \frac{1}{|S|}$. Lastly, we assume the initial state is state $1$. Given the uniform transition and identical per-step payoff matrices, the \mpsol solution just consists of a \sol to each per-step game (each stage game is just the original game plus $(H-h)$ times a constant and so solutions are not affected). Thus, the expected values are at least $p_v = 10H/r$ and $p_e = 10H/r$ respectively.

\begin{figure}
    \begin{subfigure}[b]{.32\textwidth}
        \centering
        \includegraphics[scale=.3]{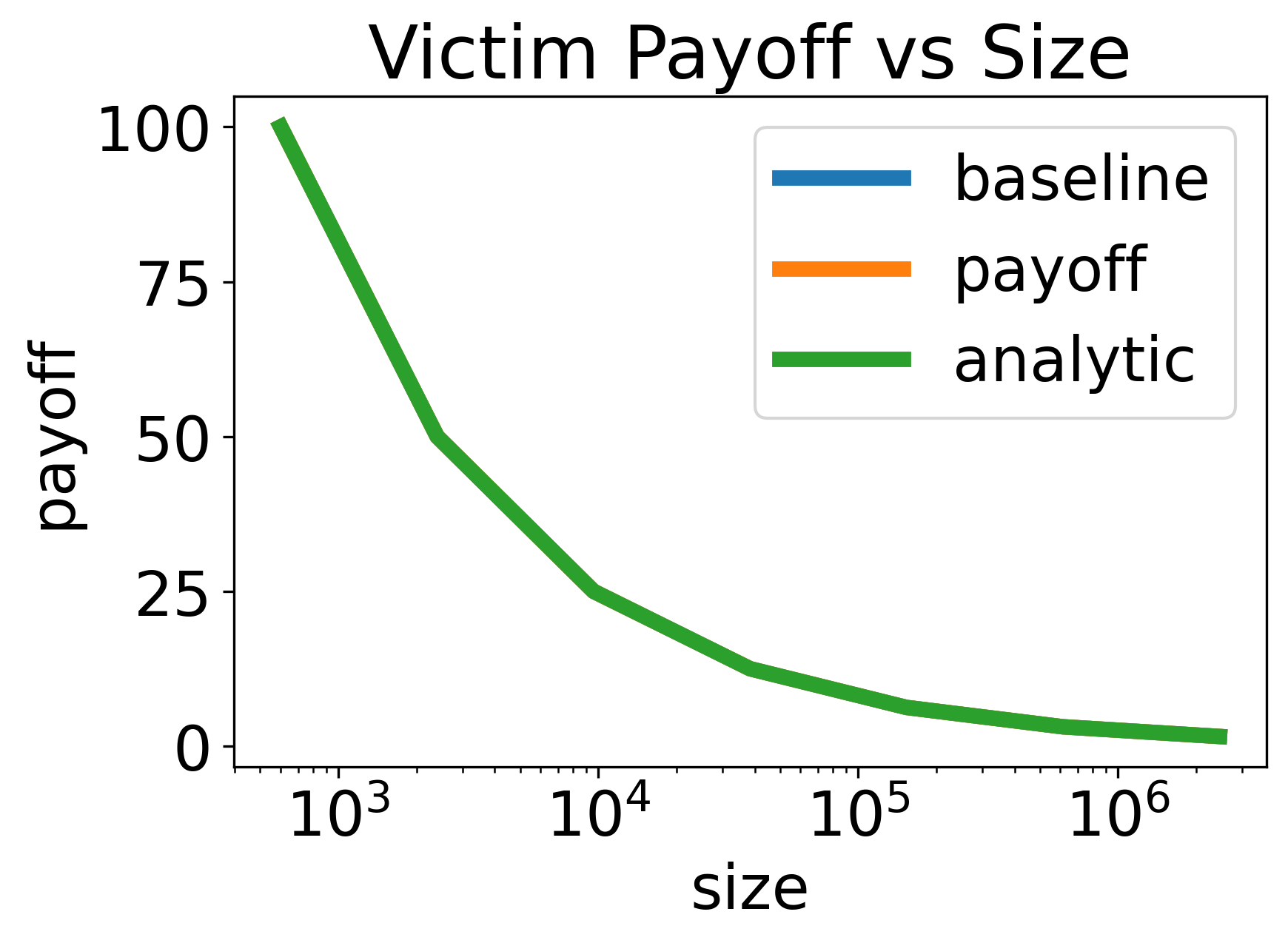}
        \caption{Victim}
        \label{fig: victim1}
    \end{subfigure}
    \hfill
    \begin{subfigure}[b]{.32\textwidth}
        \centering
        \includegraphics[scale=.3]{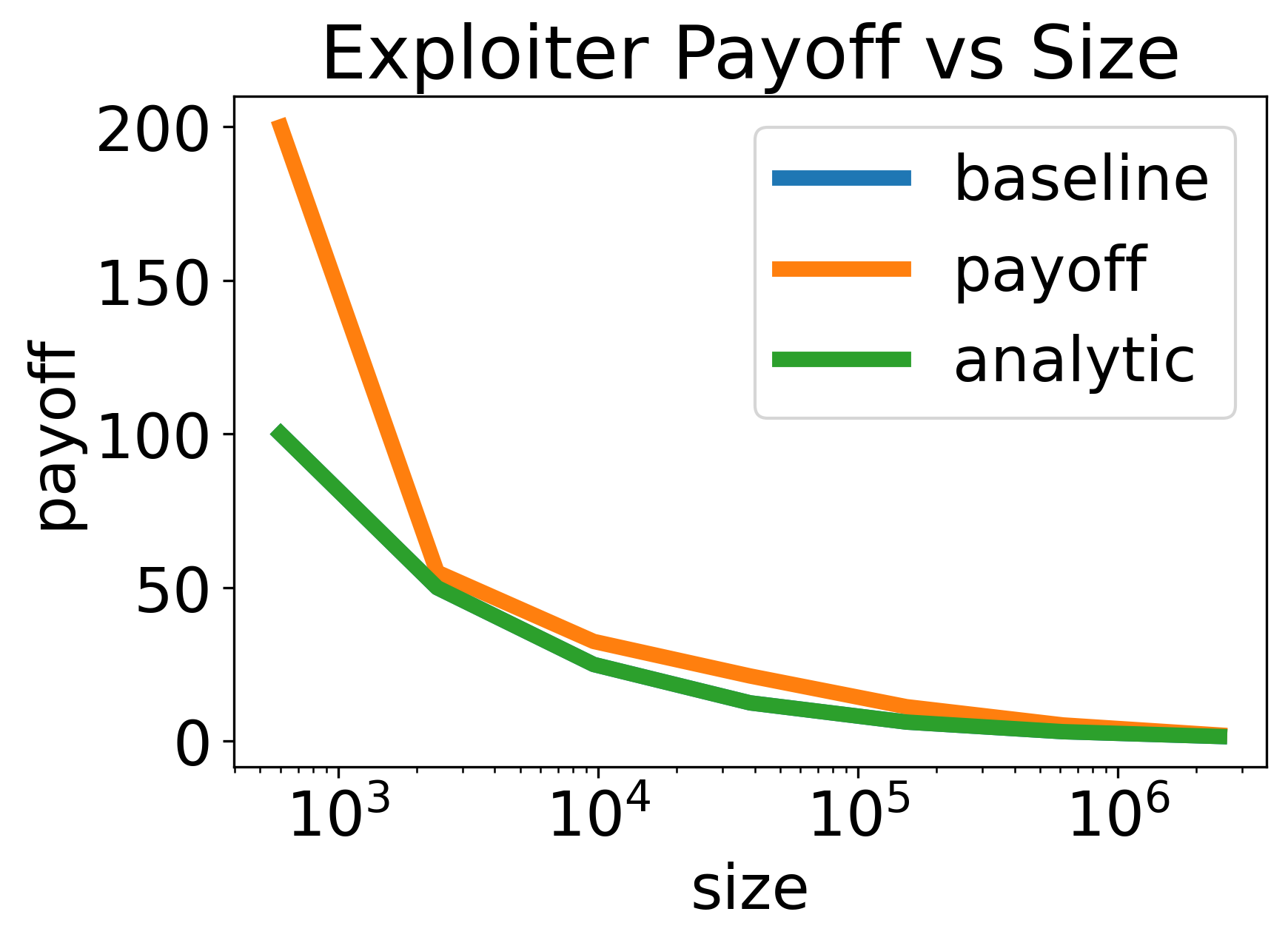}
        \caption{Exploiter}
        \label{fig: exploiter1}
    \end{subfigure}
    \hfill
    \begin{subfigure}[b]{.32\textwidth}
        \centering
        \includegraphics[scale=.3]{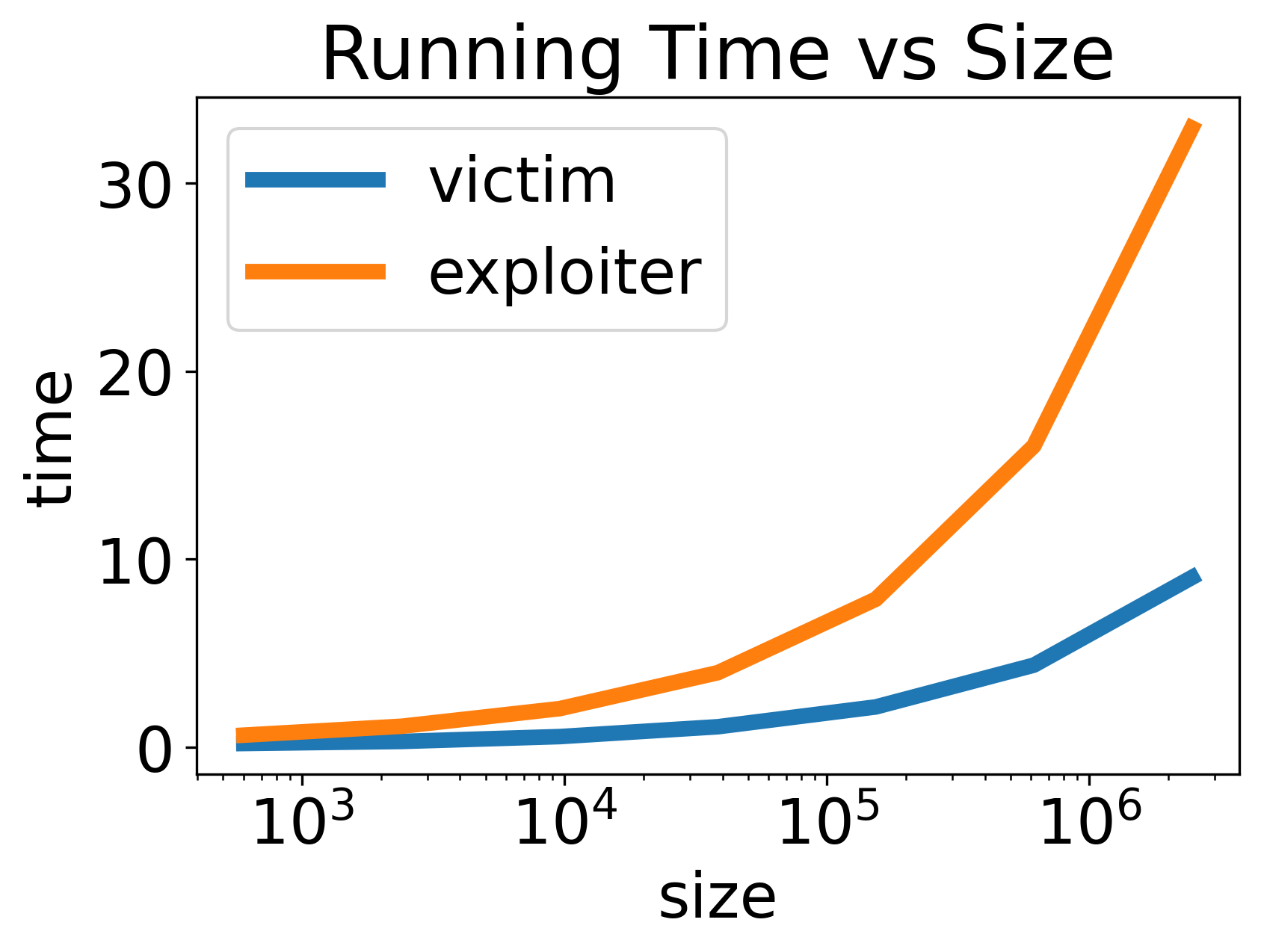}
        \caption{Time}
        \label{fig: time1}
    \end{subfigure}
    \caption{Block Diagonal Example}
    \label{fig: block}
\end{figure}

Running \cref{alg: victim} on these Markov games with $r$ varying (and thus the number of entries varying) yields the plot in \cref{fig: victim1}. Here, the baseline curve corresponds to plotting the computed $p_v$, the payoff curve corresponds to plotting $V_v^{\pi^*,\nu^*}$ (the actual value achieved), and the analytic curve corresponds to plotting the exact formula we derived for $p_v = 10H/r$. We see that all three match nearly exactly for this game. We also run \cref{alg: exploiter} on the game to create \cref{fig: exploiter1}. Here, we see the actual payoff achieved by the exploiter is better than the worst-case baseline of $p_e$. In all cases, we see the analytic solution matches the computed baseline and so the algorithms worked correctly. In \cref{fig: time1}, we see the algorithms were able to compute \mpsol solutions for Markov games with far over a million entries (measured by $HSnm$) in under a minute (note that time is measured in seconds). Unsurprisingly, the victim's algorithm ran significantly faster than the exploiter's.

\begin{figure}
    \begin{subfigure}[b]{.32\textwidth}
        \centering
        \includegraphics[scale=.3]{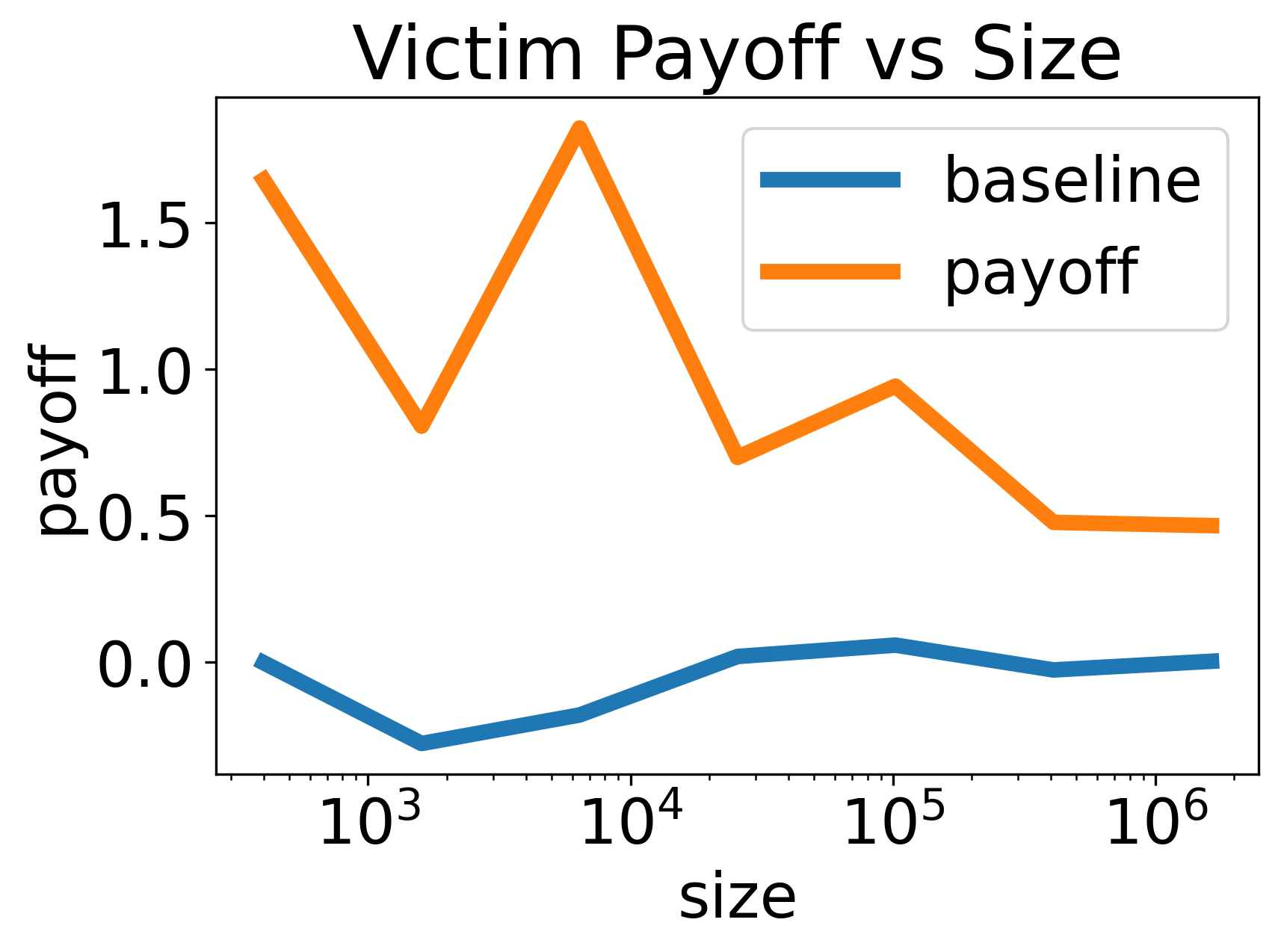}
        \caption{Victim}
        \label{fig: victim2}
    \end{subfigure}
    \hfill
    \begin{subfigure}[b]{.32\textwidth}
        \centering
        \includegraphics[scale=.3]{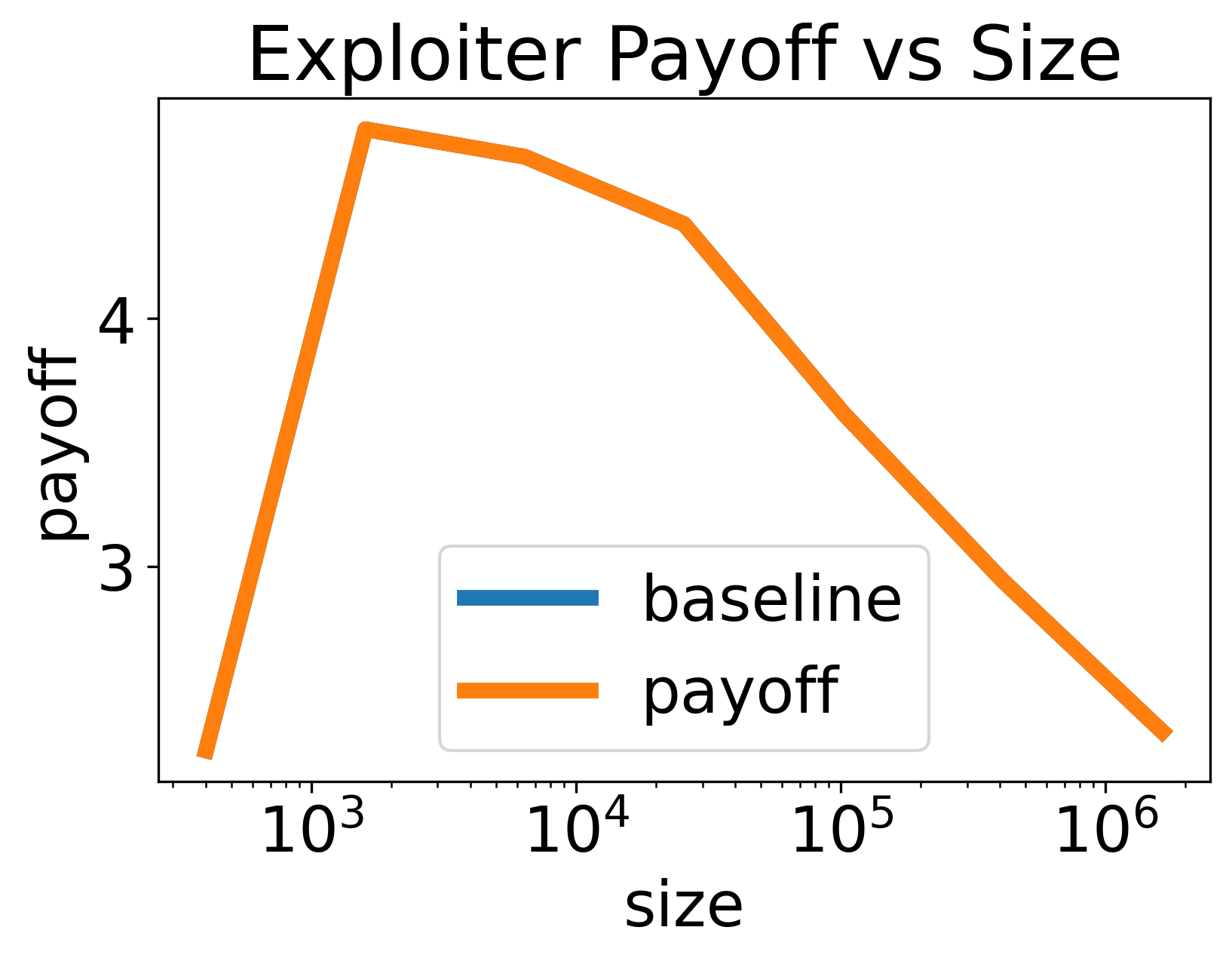}
        \caption{Exploiter}
        \label{fig: exploiter2}
    \end{subfigure}
    \hfill
    \begin{subfigure}[b]{.32\textwidth}
        \centering
        \includegraphics[scale=.3]{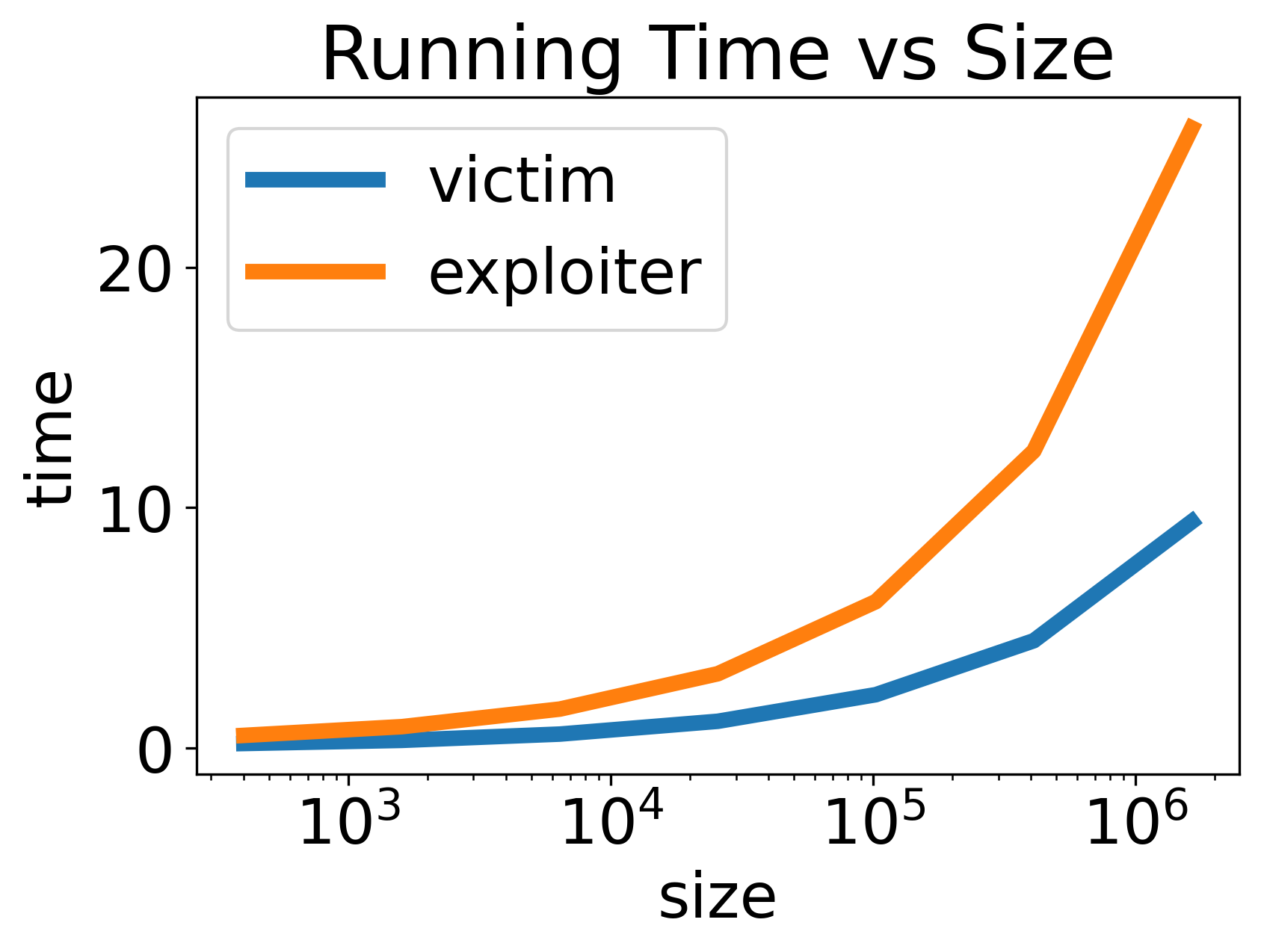}
        \caption{Time}
        \label{fig: time2}
    \end{subfigure}
    \caption{Random Game Example}
    \label{fig: random}
\end{figure}

We also tested the algorithms on Markov games with uniformly random transitions (again with $|S| = H = 10$), but with a uniformly-random bimatrix game at each step. Formally, each $R_{hu}(s) \in U[-1,1]^{n \times n}$ with $n$ varying per Markov game. In such games, a \mpsol solution no longer is simply a \sol for each bimatrix game separately. In \cref{fig: victim2}, we see that the victim tended to achieve a value that was much higher than its baseline, maximin, payoff. However, in \cref{fig: exploiter2}, we see the exploiter was unable to do much better than its baseline payoff in such random games. Regardless, we again see our algorithms were able to handle games with well over a million entries in under a minute. This gives evidence of their efficiency in practice.

\section{Conclusions}

In this paper, we studied games with an information asymmetry structure particularly prevalent in security applications. Namely, games where a victim player only knows its own payoff but an exploiter player knows both the victim's payoffs and its own. This includes any security game where the victim would not know the true goal of the exploiter and adversarial MARL when the exploiter hacks the actions of one of the players. We introduced a new solution concept called \sol that is appropriate so long as the players are uncertainty averse and so choose to maximize their worst-case payoffs. Importantly, both players have no incentive to deviate from a \sol solution. We then showed \sol solutions for bimatrix games can be efficiently computed in a distributed fashion using LPs. We also defined \mpsol as the appropriate solution concept for Markov games and showed \mpsol solutions are \sol solutions and can also be computed efficiently in a distributed manner via backward induction.

\paragraph{Future Work.} Although \sol addresses many concerns for games with information asymmetry, there are still two notable directions left open. First, \sol only addresses two player games. When many players are involved the information structures become much more complicated. 
Beyond multiple players, our paper only concerns the ``planning'' setting where each player's own payoffs are known to them in advance. One could similarly consider a learning setting where the players do not know even their own payoffs but can learn about them by repeated play. This direction may have connections to the theory of partially observable Markov games.

\bibliographystyle{plain}
\bibliography{viser}

\newpage

\appendix

\section{Proofs}

This section contains all proofs for the theoretical results in the main paper.

\subsection{Proofs for \texorpdfstring{\cref{sec: sol}}{Section 3}}

Proof of \cref{assump: victim}.
\begin{proof}
    Suppose the victim chooses any $x \not \in X^*$. By definition of $X^*$, there must exist an exploiter strategy $y$ satisfying $x^\top A y < p_v$. Absent any additional information, the victim cannot assume the exploiter will not play $y$ and so risks achieving a lower payoff than $p_v$. Even knowing \cref{assump: exploiter} does not alleviate this risk due to the victim not knowing $B$ which would be necessary to predict $y^*$. Thus, under \cref{assump: rationality}, the victim must play some $x^* \in X^*$.
\end{proof}

Proof of \cref{assump: exploiter}.
\begin{proof}
    Suppose the exploiter chooses any $y \not \in \BR(X^*)$. By definition of $\BR(X^*)$, there must exist a victim strategy $x^* \in X^*$ satisfying $x^{*^\top} B y < p_e$. Under \cref{assump: rationality}, the exploiter knows the victim plays some strategy in $X^*$, but is uncertain which security strategy is chosen. Thus, absent any additional information, the exploiter cannot assume the victim will not play $x^*$ and so risks achieving lower payoff than $p_e$. Thus, under \cref{assump: rationality}, the victim must play some $y^* \in \BR(X^*)$.
\end{proof}

For the claim that $\sol \neq \varnothing$, 
this follows from the fact that both $X^*$ and $\BR(X^*)$ are non-empty sets as shown in \cref{prop: victim-strategy} and \cref{prop: exploiter-strategy}. The proof of \cref{prop: sol} is then immediate.

\subsection{Proofs for \texorpdfstring{\cref{sec: bimatrix}}{Section 4}}
As mentioned in the main text, the proof of \cref{prop: victim-strategy} is immediate from standard bimatrix game theory~\citep{NashLP}.

Proof of \cref{prop: exploiter-strategy}.
\begin{proof}
The exploiter wishes to compute
\[\argmax_{y \in \Delta(m)} \min_{x^* \in X^*} x^{*^\top} B y\]
For any fixed $y$, we see that the minimization problem $\min_{x^* \in X^*} x^{*^\top} B y$ can be explicitly written as an LP:
\begin{equation*}
\begin{aligned}
\min_{x \geq 0} \quad & \ x^\top B y \\
\text{s.t. } \quad & z^* - x^\top A e_j \leq 0 \quad \forall j \in [m] \\
	     & 1^\top x - 1 = 0,
\end{aligned}
\end{equation*}
where $(x^*,z^*)$ is any solution to LP~\ref{equ: victim-LP}. Next, we construct the dual of this program. To this end, we introduce a dual vector $w \in \Real^m_{\geq 0}$ corresponding to the inequality constraints and a dual variable $v \in \Real$ corresponding to the equality constraint. We multiply these dual variables by their respective constraints and add them to the objective to get the equivalent optimization:
\begin{equation*}
    \max_{w \geq 0,v} \min_{x \geq 0} x^\top B \nu + (z^* 1^\top - x^\top A)w + (x^\top 1 - 1)v
\end{equation*}
By rearranging the objective to be in terms of $x$, we get:
\begin{equation*}
    \max_{w \geq 0,v} \min_{x \geq 0} x^\top(By -Aw +1v) + z^*1^\top w - v
\end{equation*}
Moving the terms involving $x$ into the constraints then gives the Dual:
    \begin{align*}
        \max_{w\geq 0, \alpha} \quad & \ z^*1^\top w - \alpha\\
    	\text{s.t. } \quad &
    	\alpha + e_i^\top By - e_i^\top A w \geq 0 \quad \forall i \in [n],
    \end{align*}
Applying $\max_{y \in \Delta(m)}$ outside of the Dual, yields the exploiter's LP~\ref{equ: exploiter-LP}:
\begin{align*}
        &\max_{y, w \in \Real^m, \alpha \in \Real} \quad \ z^*1^\top w - \alpha\\
        \text{s.t. }\quad & \alpha + e_i^\top B y - e_i^\top A w \geq 0 \quad \forall i \in [n]  \\
        & 1^\top y = 1, \quad y \geq 0 \quad w \geq 0.  
\end{align*}

The fact that there exist optimal solutions, i.e., $\BR(X^*) \neq \varnothing$, follows from LP~\ref{equ: exploiter-LP} being feasible and bounded. Specifically, it is easily seen that choosing $y = e_1$, $w = 0$, and $\alpha = \max_{i \in [n]} |e_i^\top B e_1$ gives a feasible solution to LP~\ref{equ: exploiter-LP}. Boundedness follows from the fact that by LP duality, LP~\ref{equ: exploiter-LP} is value equivalent to the original problem $\max_{y \in \Delta(m)} \min_{x \in X^*} x^\top B y$, which is clearly bounded being that $(A,B)$ is a finite bimatrix game. 

Suppose that $y^*, w^*, \alpha^*$ are any solution to LP~\ref{equ: exploiter-LP}. Then, the optimal objective is exactly $p_e := z^*1^\top w^* - \alpha^*$. Any optimal solution triple must satisfy $z^*1^\top w - \alpha = p_e$. Adding in this constraint yields the polytope of optimal solutions:
\begin{equation*}
    P^* := \set{y \in \Delta(m), w \in \Real^m_{\geq 0} \mid e_i^\top B y + (z^* 1^\top - e_i^\top A) w \geq p_e \  \forall i \in [n]}.
\end{equation*}
To write this as a subset of the exploiter's strategy space, we simply project it: for any fixed $w$, the set of $y$s that satisfy the inequality are all optimal strategies for the exploiter. By taking the union of these sets over all $w$, we get the set of all the exploiter's optimal strategies:
\begin{equation*}
    \bigcup_{w \in \Real^m_{\geq 0}} \set{y \in \Delta(m) \mid e_i^\top B y + (z^* 1^\top - e_i^\top A) w \geq p_e \  \forall i \in [n]}.
\end{equation*}
Equivalently, we can add an existential quantifier over $w$ to get the set:
\begin{equation*}
    \BR(X^*) = \set{y \in \Delta(m) \mid \exists w \in \Real_{\geq 0}^m, \ e_i^\top B y + (z^* 1^\top - e_i^\top A) w \geq p_e \  \forall i \in [n]}.
\end{equation*}
Since $\BR(X^*)$ is a projection of the polytope $P^*$, Corollary 2.4 in ~\citep{bertsimas1997introduction} implies that $\BR(X^*)$ is a polytope.
\end{proof}

The proof of \cref{prop: sol-efficient} is immediate from \cref{prop: victim-strategy} and \cref{prop: exploiter-strategy}.

\subsection{Proofs for \texorpdfstring{\cref{sec: markov}}{Section 5}}

For a Markov game $G$ under \cref{assump: markovian}, $\sol = \Pi^* \times N^*$ where,

\begin{equation*}
     \Pi^* := \argmax_{\pi \in \Pi} \min_{\nu \in N} V_{v}^{\pi,\nu} \text{  and  } N^* := \argmax_{\nu \in N} \min_{\pi \in \Pi^*} V_e^{\pi, \nu}.
\end{equation*}
Note, absent the assumption, the max's and min's would be with respect to full history dependent policies rather than only Markovian ones.
We see \sol for each stage game $(h,s)$ is given by,
\begin{equation*}
    \Pi_{hs}^* := \argmax_{\pi \in \Pi} \min_{\nu \in N} V_{hv}^{\pi,\nu}(s) \text{  and  } N_{hs}^* := \argmax_{\nu \in N} \min_{\pi \in \Pi_{hs}^* \cap \bigcap_{k > h,s'} \Pi^*_{ks'}} V_{he}^{\pi, \nu}(s).
\end{equation*}
The corresponding maximum stage game values are,
\begin{equation*}
    V_{hv}^*(s) := \max_{\pi \in \Pi} \min_{\nu \in N} V_{hv}^{\pi,\nu}(s) \text{  and  } V_{he}^*(s) := \max_{\nu \in N} \min_{\pi \in \Pi_{hs}^* \cap \bigcap_{k > h,s'} \Pi^*_{ks'}} V_{he}^{\pi, \nu}(s).
\end{equation*}
To simplify notation, we define $\tilde \Pi_{hs} = \Pi_{hs}^* \cap \bigcap_{k > h,s'} \Pi^*_{ks'}$ and $\tilde N_{hs} = N_{hs}^* \cap \bigcap_{k > h,s'} N_{ks'}^*$.
Since $V_{hu}^{\pi,\nu}(s)$ depends only the policies from time $h$ onward, we only need to consider partial policies that are only defined at time $h$ onward starting from state $s$. Specifically, we need only consider $\pi = \pi_h(s) \cdot \{\pi_k(s')\}_{k,s'} \in \Pi_{hs} = \Delta(n) \times \bigtimes_{k > h, s'} \Delta(n)$ and $\nu = \nu_h(s) \cdot \{\nu_k(s')\}_{k,s'} \in N_{hs} = \Delta(m) \times \bigtimes_{k > h, s'} \Delta(m)$ when defining \sol for stage games. Similarly, when applying the bellman-consistency equations,
\begin{equation}\label{equ: consistency}
    V_{hu}^{\pi,\nu}(s) = \E_{a_v \sim \pi_h(s), a_e \sim \nu_h(s)}\brac{R_{hu}(s,a_v,a_e) + \sum_{s'}P_h(s' \mid s, a_v, a_e)V_{(h+1)u}^{\pi,\nu}(s')},
\end{equation}
we only need to consider the partial policies from time $h+1$ onward starting from $s'$ in the term $V_{(h+1)u}^{\pi,\nu}(s')$. Note, the base case is: $V_{Hu}^{\pi,\nu}(s) = \E_{a_v \sim \pi_H(s), a_e \sim \nu_H(s)} R_{Hu}(s,a_v,a_e)$.

Recall, we define \mpsol to consist of policies that are \sol solutions for each stage game simultaneously: 
\begin{equation*}
    \mpsol = \bigcap_{h,s} \Pi^*_{hs} \times \bigcap_{h,s} N^*_{hs}.
\end{equation*}
Here, we assume the intersection is with respect to partial policies, so $\Pi \cap \Pi^*_{Hs}$ represents all full policies whose partial policy at the last step satisfies $\pi_H(s) \in \argmax_{\pi_H(s) \in \Delta(n)}\min_{\nu_H(s) \in \Delta(m)} V_{hv}^{\pi,\nu}(s)$ only; the partial policies at other times and states are unrestricted.
Observe that if $\pi^*_s \in \Pi_{1s}^*$ for all $s$, then $\pi^* = \{\pi^*_s\}_s \in \Pi^*$ and if $\nu^*_s \in N_{1s}^*$ for all $s$, then $\nu^* = \{\nu^*_s\}_s \in N^*$. Thus, $\mpsol \subseteq \sol$. We later show that $\mpsol \neq \varnothing$.

For all $s \in S$, define
\[\hat \Pi_{Hs} = \argmax_{\pi_H(s) \in \Delta(n)}\min_{\nu_H(s) \in \Delta(m)} \E_{a_v \sim \pi_H(s), a_e \sim \nu_H(s)} R_{Hv}(s,a_v,a_e),\]
and for all $h < H$,
\[\hat \Pi_{hs} = \argmax_{\pi_h(s) \in \Delta(n)}\min_{\nu_h(s) \in \Delta(m)} \E_{a_v \sim \pi_H(s), a_e \sim \nu_H(s)} \brac{R_{hv}(s,a_v,a_e) + \sum_{s'} P_h(s' \mid s, a_v, a_e)V^*_{hv}(s')}.\]
Note, through $Q^*$, \cref{equ: Q*}, we see these definitions are equivalent to the ones we gave in the main text.

\begin{lemma}\label{lem: victim}
    For any $s \in S$, $\tilde \Pi_{Hs} = \hat \Pi_{Hs}$ and for any $h < H$, 
    \[\tilde \Pi_{hs} = \hat \Pi_{hs} \times \bigtimes_{s'} \tilde \Pi_{(h+1)s'}.\]
    Furthermore, 
    \[\tilde \Pi_{hs} = \hat \Pi_{hs} \times \bigtimes_{k > h, s'} \hat \Pi_{ks'} \neq \varnothing.\]
\end{lemma}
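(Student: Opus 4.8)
The plan is to prove all three assertions at once by backward induction on $h$, using the one-step identity $\tilde \Pi_{hs} = \hat \Pi_{hs} \times \bigtimes_{s'} \tilde \Pi_{(h+1)s'}$ as the workhorse and obtaining the fully factored form $\tilde \Pi_{hs} = \hat \Pi_{hs} \times \bigtimes_{k>h,s'}\hat \Pi_{ks'}$ by unrolling that recursion with the induction hypothesis. The base case $h=H$ is immediate: there are no future stages, so $\tilde \Pi_{Hs} = \Pi^*_{Hs}$, and since the terminal value $V^{\pi,\nu}_{Hv}(s) = \E_{a_v\sim\pi_H(s),a_e\sim\nu_H(s)}R_{Hv}(s,a_v,a_e)$ depends only on $\pi_H(s)$ and $\nu_H(s)$, the maximin over partial policies collapses to the single-stage matrix-game maximin, which is exactly $\hat \Pi_{Hs}$.

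For the inductive step I would work with the \emph{security value} $g^\pi_h(s) := \min_{\nu \in N} V^{\pi,\nu}_{hv}(s)$, so that $V^*_{hv}(s) = \max_{\pi} g^\pi_h(s)$ and $\Pi^*_{hs} = \{\pi : g^\pi_h(s) = V^*_{hv}(s)\}$. The crucial step is the Bellman recursion for the security value,
\[
g^\pi_h(s) = \min_{\nu_h(s)\in\Delta(m)} \E_{a_v\sim\pi_h(s),\,a_e\sim\nu_h(s)}\Big[ R_{hv}(s,a_v,a_e) + \sum_{s'} P_h(s'\mid s,a_v,a_e)\, g^\pi_{h+1}(s') \Big],
\]
which I would derive from the Bellman-consistency equation \cref{equ: consistency}. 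The ``$\ge$'' direction is routine, since replacing each continuation value $V^{\pi,\nu}_{(h+1)v}(s')$ by its lower bound $g^\pi_{h+1}(s')$ term-by-term produces a fixed lower bound that survives taking $\min_{\nu}$ on the left. The ``$\le$'' direction is the main obstacle: one cannot naively push $\min_\nu$ inside the sum over $s'$, because the exploiter's future components $\nu_k(s'')$ for $k\ge h+2$ are \emph{shared} across the continuation values $g^\pi_{h+1}(s')$ for different $s'$, so the per-$s'$ minimizations are coupled. I would resolve this by observing that for a \emph{fixed} $\pi$, minimizing over $\nu$ is just a finite-horizon MDP controlled by the exploiter; standard backward-induction theory then yields a single Markov minimizer that simultaneously attains $g^\pi_k(s')$ from every starting pair $(k,s')$, which decouples the minimizations and delivers the recursion (the minimax theorem guarantees each stage value is well defined, and the uniqueness of the zero-sum stage value makes the backward induction coherent).

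With the recursion in hand, fix any $\pi \in \bigcap_{k>h,s'}\Pi^*_{ks'}$; then $g^\pi_{h+1}(s') = V^*_{(h+1)v}(s')$ for every $s'$, so the right-hand side above becomes $\min_{\nu_h(s)}\E_{a_v\sim\pi_h(s),a_e\sim\nu_h(s)}[R_{hv}(s,\cdot)+\sum_{s'}P_h(\cdot)V^*_{(h+1)v}(s')]$, a function of $\pi_h(s)$ alone. Maximizing over $\pi$ shows this equals the single-stage matrix-game value defining $\hat\Pi_{hs}$ via $Q^*_{hv}$ in \cref{equ: Q*}, whence $V^*_{hv}(s)$ is the stage-game value and, for such $\pi$, we get $\pi\in\Pi^*_{hs} \iff \pi_h(s)\in\hat\Pi_{hs}$. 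Combining with the set identity $\bigcap_{s'}\tilde\Pi_{(h+1)s'} = \bigcap_{k>h,s'}\Pi^*_{ks'}$ (the continuation constraint regroups the future coordinates exactly) yields the one-step decomposition $\tilde\Pi_{hs} = \hat\Pi_{hs}\times\bigtimes_{s'}\tilde\Pi_{(h+1)s'}$, and substituting the induction hypothesis gives the factored form. Finally, non-emptiness follows because each $\hat\Pi_{ks'}$ is a nonempty polytope by \cref{prop: victim-strategy}, and a product of nonempty sets is nonempty.
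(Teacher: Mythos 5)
Your proof is correct, but it takes a genuinely different route from the paper's. The paper offers no direct argument for this lemma at all: it simply cites standard zero-sum Markov game theory and remarks that a direct proof is the special case of the exploiter's argument (\cref{lem: recursion} and \cref{cor: solutions}) in which the inner feasible set is unconstrained. Your proof is a self-contained reconstruction of that cited theory, and its technical core differs from the paper's template. You resolve the coupling of the minimizer's future components across successor states by fixing $\pi$, viewing $\min_{\nu \in N} V^{\pi,\nu}_{hv}$ as a finite-horizon MDP for the exploiter, and invoking a Markov minimizer that is optimal simultaneously from every stage $(k,s')$; this yields the Bellman recursion for your security value $g^\pi_h(s)$ for \emph{every} $\pi$. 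The paper's template instead decomposes the joint $\max$--$\min$ over partial policies coordinate-wise and argues separability across $s'$ from the fact that the future optimal values are unique and fixed by the induction hypothesis. Your device is the more careful one --- you explicitly name and resolve the coupling issue that the paper's separability step passes over quickly --- while the paper's choice buys economy: it proves the recursion machinery once, for the harder constrained-minimum (exploiter) case, and inherits the victim's statement as the degenerate case where the minimizing set is the full product $\Delta(m) \times \bigtimes_{k>h,s'} \Delta(m)$. One step you should make explicit: after substituting $g^\pi_{h+1}(s') = V^*_{(h+1)v}(s')$ you maximize only over $\pi$ with optimal continuations, so to identify $V^*_{hv}(s)$ (an unrestricted maximum) with the matrix-game value of $Q^*_{hv}(s)$ you need the one-line monotonicity observation that $g^\pi_{h+1}(s') \le V^*_{(h+1)v}(s')$ for all $\pi$, hence no policy with suboptimal continuations can exceed the stage value; this follows from the same replacement argument as your ``$\ge$'' direction, but it deserves a sentence.
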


\begin{proof}
    The lemma follows immediately from standard zero-sum Markov game theory~\citep{Markov, maximin}. A direct proof can also be seen as a special case of the exploiter's argument that we provide next.
\end{proof}
The proof of \cref{prop: victim-mstrategy} is then immediate from \cref{prop: victim-strategy} and \cref{lem: victim}.


\paragraph{Optimality Equations.} To prove the correctness of the exploiter's algorithm, we first derive bellman-optimality-like equations for the exploiter's stage value.

\begin{lemma}\label{lem: recursion}
    For any $s \in S$, the exploiter's \sol value $V_{he}^*(s)$ satisfies the following recursive relation:
    \[V_{He}^*(s) = \max_{\nu_H(s) \in \Delta(m)} \min_{\pi_H(s) \in \hat \Pi_{Hs}}  \E_{\pi_H(s), \nu_H(s)} R_{He}(s, a_v, a_e),\]
    and,
    \[V_{he}^*(s) = \max_{\nu_h(s) \in \Delta(m)} \min_{\pi_h(s) \in \hat \Pi_{hs}} \E_{\pi_h(s), \nu_h(s)}\brac{R_{he}(s,a_v,a_e) + \sum_{s'} P_h(s' \mid s,a_v,a_e) V^*_{(h+1)e}(s')},\]
    where $a_v \sim \pi_h(s)$ and $a_e \sim \nu_h(s)$ in both expectations.
\end{lemma}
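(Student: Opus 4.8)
The plan is to prove the recursion by backward induction on $h$, establishing the two inequalities ($\geq$ and $\leq$) separately at each level. Throughout, the two structural facts I would lean on are \cref{lem: victim}, which gives the product factorization $\tilde\Pi_{hs} = \hat\Pi_{hs}\times\bigtimes_{k>h,s'}\hat\Pi_{ks'}$, and the Bellman-consistency equation \cref{equ: consistency}, which lets me peel off the stage-$h$ contribution. The base case $h=H$ is immediate: since $V_{He}^{\pi,\nu}(s)=\E_{\pi_H(s),\nu_H(s)}R_{He}(s,a_v,a_e)$ depends only on the stage-$H$ components $\pi_H(s),\nu_H(s)$, and \cref{lem: victim} gives $\tilde\Pi_{Hs}=\hat\Pi_{Hs}$ with no future constraints, the max-min over full policies collapses to a max-min over these two coordinates, which is exactly the claimed base-case formula.

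For the inductive step, I would first use \cref{equ: consistency} to write, for any $\pi\in\tilde\Pi_{hs}$ and any $\nu$,
\[
V_{he}^{\pi,\nu}(s)=\E_{\pi_h(s),\nu_h(s)}\Big[R_{he}(s,a_v,a_e)+\sum_{s'}P_h(s'\mid s,a_v,a_e)\,V_{(h+1)e}^{\pi,\nu}(s')\Big],
\]
and crucially observe that the continuation value $V_{(h+1)e}^{\pi,\nu}(s')$ depends only on the policy components from time $h+1$ onward, not on $\pi_h(s)$ or $\nu_h(s)$. Using \cref{lem: victim}, the constraint $\pi\in\tilde\Pi_{hs}$ separates into $\pi_h(s)\in\hat\Pi_{hs}$ together with the future part lying in $\bigtimes_{k>h,s'}\hat\Pi_{ks'}$, which is precisely the continuation feasible set $\tilde\Pi_{(h+1)s'}$ for every $s'$. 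This product structure is what lets me decouple the stage-$h$ decision from the continuation.

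The $\leq$ direction I would handle by fixing an arbitrary exploiter policy $\nu$: against a fixed $\nu$ the victim faces an ordinary (per-stage-constrained) MDP in which it minimizes the exploiter's return, so there is a single Markovian continuation policy $\pi^\circ\in\bigtimes_{k>h,s'}\hat\Pi_{ks'}$ that simultaneously attains $\min_{\pi'\in\tilde\Pi_{(h+1)s'}}V_{(h+1)e}^{\pi',\nu}(s')$ for every $s'$ at once. Bounding each continuation minimum by $V_{(h+1)e}^*(s')$ and then choosing the stage-$h$ minimizer $\pi_h(s)\in\hat\Pi_{hs}$ gives $\min_{\pi\in\tilde\Pi_{hs}}V_{he}^{\pi,\nu}(s)\le \min_{\pi_h(s)\in\hat\Pi_{hs}}\E_{\pi_h(s),\nu_h(s)}[\cdots]$; since the right side depends on $\nu$ only through $\nu_h(s)$, taking $\max_\nu$ yields the recursion as an upper bound. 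The $\geq$ direction is symmetric but requires a continuation exploiter policy $\nu^\sharp$ that achieves $V_{(h+1)e}^*(s')$ simultaneously for all $s'$; I would fold the existence of such a simultaneously-optimal $\nu^\sharp$ into a strengthened inductive hypothesis carried alongside the recursion, then play the stage-$h$ maximizer $\nu_h^*(s)$ followed by $\nu^\sharp$ and lower-bound against every $\pi\in\tilde\Pi_{hs}$.

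The step I expect to be the main obstacle is exactly this simultaneous optimality across starting states: the continuation term $\sum_{s'}P_h(\cdot)\,V_{(h+1)e}^{\pi,\nu}(s')$ couples the different $s'$ through shared Markovian policy components at later times, so ``min of a weighted sum'' equals ``weighted sum of mins'' only when one continuation policy is optimal from all $s'$ at once. This is the familiar principle-of-optimality fact for Markovian policies, and the reason it goes through here is precisely the per-$(\text{stage},\text{state})$ product form of the feasible sets guaranteed by \cref{lem: victim}: each $\hat\Pi_{ks'}$ constrains only its own coordinate, so the constrained MDP/game still admits state-wise-optimal Markovian policies. I would therefore make the inductive hypothesis explicitly carry the existence of these simultaneously-optimal continuation policies for both players, so that both inequalities close cleanly.
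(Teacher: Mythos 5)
Your proposal is correct, and it follows the same backward-induction skeleton as the paper's proof: the identical base case at $h=H$, Bellman consistency (\cref{equ: consistency}), and the product factorization $\tilde\Pi_{hs} = \hat\Pi_{hs}\times\bigtimes_{k>h,s'}\hat\Pi_{ks'}$ from \cref{lem: victim} to decouple the stage-$h$ decision from the continuation. Where you genuinely differ is in how the crucial interchange of $\sum_{s'}$ with the continuation max-min is handled. The paper runs a chain of \emph{equalities}: it pulls $V_{(h+1)e}^{\pi,\nu}(s')$ out of the expectation as a constant, argues via the induction hypothesis that the continuation optimizations are ``separable over $s'$'' (since, with $V^*_{(h+2)e}(s'')$ fixed, each stage value varies only with the stage-$(h+1)$ components, which can be chosen independently), and on that basis moves the maximin inside the sum and back out again. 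You instead prove the two inequalities separately, with explicitly constructed simultaneously optimal continuation policies: $\pi^\circ$ for the victim against a fixed $\nu$ (a standard constrained-MDP dynamic-programming fact, valid precisely because the constraint sets are per-$(k,s')$ products), and $\nu^\sharp$ for the exploiter, whose existence you carry in a strengthened induction hypothesis. These are two renderings of the same principle-of-optimality fact, and both close the induction; your version is more verbose, but it makes fully explicit the one step the paper treats informally (the parenthetical ``varying the future partial policy cannot increase the objective value''), and the simultaneously optimal $\nu^\sharp$ you posit is essentially the object the paper only extracts afterwards in \cref{cor: solutions}, where the solution sets are shown to factor as $\tilde N_{hs} = \hat N_{hs}\times\bigtimes_{k>h,s'}\hat N_{ks'}$. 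So you pay for extra bookkeeping (a stronger inductive statement) and in exchange get a proof whose delicate coupling-across-$s'$ issue is addressed head-on rather than by appeal to separability.
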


\begin{proof}
    We proceed by induction on $h$. For the base case, consider the final time step $h = H$ and fix any $s \in S$. By the bellman-consistency equations~\ref{equ: consistency}, we know that for any $\pi$ and $\nu$, $V_{He}^{\pi,\nu}(s) = \E_{\pi_H(s),\nu_H(s)} R_{He}(s,a_v,a_e)$. We also know that by \cref{lem: victim} that $\tilde \Pi_{Hs} = \hat \Pi_{hs}$ and by the definition of the set of partial policies $N_{Hs} = \Delta(m)$. Putting these together shows,
    \begin{align*}
        V^*_{He}(s) &= \max_{\nu \in N_{Hs}} \min_{\pi \in \tilde \Pi_{Hs}} V^{\pi, \nu}_{He}(s) \\
        &= \max_{\nu \in N_{Hs}} \min_{\pi \in \tilde \Pi_{Hs}} \E_{\pi_H(s),\nu_H(s)} R_{He}(s,a_v,a_e) \\
        &= \max_{\nu_h(s) \in \Delta(m)} \min_{\pi_h(s) \in \hat \Pi_{Hs}} \E_{\pi_H(s),\nu_H(s)} R_{He}(s,a_v,a_e).
    \end{align*}

    For the inductive step, consider any time step $h < H$ and fix any $s \in S$. Applying the bellman-consistency equations~\ref{equ: consistency} to the definition of $V^*_{he}(s)$ yields:
    \begin{equation*}\label{equ: cons-star}
    V^*_{hs}(s) = \max_{\nu \in N_{hs}} \min_{\pi \in \tilde \Pi_{hs}}\E_{a_v \sim \pi_h(s), a_e \sim \nu_h(s)}\brac{R_{hu}(s,a_v,a_e) + \sum_{s'}P_h(s' \mid s, a_v, a_e)V_{(h+1)u}^{\pi,\nu}(s')}.
    \end{equation*}
    Observe that the expression decomposes: the expectation only considers the policies at the current state and time, $(\pi_h(s), \nu_h(s))$, and the summation only considers the policies at future time steps, $\Pi_{(h+1)s'} \times N_{(h+1)s'}$. Consequently, we can break down the $\max_{\nu \in N_{hs}}$ into the separate optimizations: $\max_{\nu_h(s) \in \Delta(m)}$ and $\max_{\nu \in N_{(h+1)s'}}$ for each $s' \in S$. Similarly, from \cref{lem: victim} we know that $\tilde \Pi_{hs} = \hat \Pi_{hs} \times \bigtimes_{s'} \tilde \Pi_{(h+1)s'}$ and so we can break down the $\min_{\pi \in \tilde \Pi_{hs}}$ into the separate optimizations: $\min_{\pi_h(s) \in \hat \Pi_{hs}}$ and $\min_{\pi \in \tilde \Pi_{(h+1)s'}}$ for each $s' \in S$. 
    This yields the equivalent optimization:
    \begin{equation*}
        \max_{\nu_h(s) \in \Delta(m)} \max_{\nu \in \bigtimes_{s'} N_{(h+1)s'}} \min_{\pi_h(s) \in \hat \Pi_{hs}} \min_{\pi \in \bigtimes_{s'}\tilde \Pi_{(h+1)s'}}\E_{\pi_h(s),\nu_h(s)}\brac{\ldots}.
    \end{equation*}
    Now, consider the summation term inside of the optimization: \begin{equation*}
        \E_{\pi_h(s), \nu_h(s)} \brac{\sum_{s'} P_h(s' \mid s, a_v, a_e) V_{(h+1)e}^{\pi, \nu}(s')}.
    \end{equation*} 
    We can apply linearity of expectation to get the equivalent term:
    \begin{equation*}
        \sum_{s'} \E_{\pi_h(s), \nu_h(s)}\brac{P_h(s' \mid s, a_v, a_e)V_{(h+1)e}^{\pi, \nu}(s')}.
    \end{equation*}
    Also, since $V_{(h+1)e}^{\pi, \nu}(s')$ depends only on the partial policies at future steps, $V_{(h+1)e}^{\pi, \nu}(s')$ is constant with respect to $(\pi_h(s),\nu_h(s))$ so can be pulled out of the summation to get the equivalent term: 
    \begin{equation*}
        \sum_{s'} V_{(h+1)e}^{\pi, \nu}(s')\E_{\pi_h(s), \nu_h(s)} \brac{P_h(s' \mid s, a_v, a_e)}.
    \end{equation*}
    Now, by the induction hypothesis, we know for any $s'$ at time $h+1$,
    \begin{align*}
        V^*_{(h+1)e}(s') &= \max_{\nu_{h+1}(s') \in N_{(h+1)s'}} \min_{\pi_{h+1}(s') \in \tilde \Pi_{(h+1)s'}} \\
        &\E_{\pi_{h+1}(s'),\nu_{h+1}(s')} \brac{R_{(h+1)e}(s',a_v,a_e) + \sum_{s'} P_{h+1}(s'' \mid s', a_v, a_e) V^*_{(h+2)e}(s'')}.
    \end{align*}
    Since the term $V_{(h+2)e}^*(s'')$ is fixed and shared amongst all $s'$ at time $h+1$, we see the only variation in the stage value $V^*_{(h+1)e}(s')$ comes from the choice of $(\pi_{h+1}(s'),\nu_{h+1}(s'))$ (i.e. varying the future partial policy cannot increase the objective value). These can be independently chosen for all $s'$ at time $h+1$. Thus, the optimization problems $\max_{\nu \in N_{(h+1)s'}}\min_{\pi \in \tilde \Pi_{(h+1)s'}} V^{\pi,\nu}_{(h+1)e}(s') = V^*_{(h+1)e}(s')$ are separable over $s'$. Thus, we can bring the maximin over partial policies into the summation to get the term: 
    \begin{equation*}
        \sum_{s'} \max_{\nu \in N_{(h+1)s'}} \min_{\pi \in \tilde \Pi_{(h+1)s'}} V_{(h+1)e}^{\pi, \nu}(s')\E_{\pi_h(s), \nu_h(s)}\brac{P_h(s' \mid s, a_v, a_e)}.
    \end{equation*}
    Since $V^*_{(h+1)e}(s') = \max_{\nu \in N_{(h+1)s'}} \min_{\pi \in \tilde \Pi_{(h+1)s'}} V_{(h+1)e}^*(s')$, the expression becomes:
    \begin{align*}
        \sum_{s'} V_{(h+1)e}^*(s')\E_{\pi_h(s), \nu_h(s)}\brac{P_h(s' \mid s, a_v, a_e)}. 
    \end{align*}
    As $V^*_{(h+1)e}(s')$ is still constant with respect to $(\pi_h(s), \nu_h(s))$, we can reverse the previous steps of pulling out this term and applying linearity of expectation to get the final expression:
    \[V^*_{he}(s) = \max_{\nu_h(s) \in \Delta(m)} \min_{\pi_h(s) \in \hat \Pi_{hs}} \E_{\pi_h(s), \nu_h(s)}\brac{R_{he}(s,a_v,a_e) + \sum_{s'} P_h(s' \mid s,a_v,a_e) V^*_{(h+1)e}(s')}.\]
\end{proof}

\paragraph{Solutions Sets.} For any $s \in S$, inductively define:
\[\hat N_{Hs} = \argmax_{\nu_H(s) \in \Delta(m)} \min_{\pi_H(s) \in \hat \Pi_{Hs}}  \E_{\pi_H(s), \nu_H(s)} R_{He}(s, a_v, a_e),\]
and for any $h < H$,
\[\hat N_{hs} =  \argmax_{\nu_h(s) \in \Delta(m)} \min_{\pi_h(s) \in \hat \Pi_{hs}} \E_{\pi_h(s), \nu_h(s)}\brac{R_{he}(s,a_v,a_e) + \sum_{s'} P_h(s' \mid s,a_v,a_e) V^*_{(h+1)e}(s')},\]
to be the set of optimizers of the optimality equations for $V^*_{he}(s)$.

\begin{corollary}\label{cor: solutions}
    For any $s \in S$, $\tilde N_{Hs} = \hat N_{Hs}$ and for any $h < H$,
    \[\tilde N_{hs} = \hat N_{hs} \times \bigtimes_{k>h,s'} \hat N_{ks'}.\]
\end{corollary}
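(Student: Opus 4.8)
The plan is to prove \cref{cor: solutions} by backward induction on $h$, paralleling the value recursion already established in \cref{lem: recursion} but now tracking the argmax sets rather than the optimal values. Throughout I identify each policy set with its partial policies from time $h$ onward, as set up before \cref{lem: recursion}, and I freely use the victim-side decomposition $\tilde \Pi_{hs} = \hat \Pi_{hs} \times \bigtimes_{k>h,s'}\hat\Pi_{ks'}$ from \cref{lem: victim}. For the base case $h=H$ there are no future stages, so $\tilde N_{Hs} = N_{Hs}^*$; since $V_{He}^{\pi,\nu}(s)=\E_{\pi_H(s),\nu_H(s)} R_{He}(s,a_v,a_e)$ depends only on $(\pi_H(s),\nu_H(s))$ and $\tilde\Pi_{Hs}=\hat\Pi_{Hs}$, the maximin defining $N_{Hs}^*$ collapses exactly to the program defining $\hat N_{Hs}$, giving $\tilde N_{Hs}=\hat N_{Hs}$.

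For the inductive step at $h<H$, I would first peel off a single stage. Writing out the intersection in $\tilde N_{hs} = N_{hs}^* \cap \bigcap_{k>h,s'}N_{ks'}^*$ and regrouping the time index (separating $k=h+1$ from $k>h+1$ and noting the tail is independent of $s'$) shows $\bigcap_{k>h,s'}N_{ks'}^* = \bigcap_{s'}\tilde N_{(h+1)s'}$, hence $\tilde N_{hs} = N_{hs}^* \cap \bigcap_{s'}\tilde N_{(h+1)s'}$. The crux is then the one-step identity $N_{hs}^* \cap \bigcap_{s'}\tilde N_{(h+1)s'} = \hat N_{hs} \times \bigcap_{s'}\tilde N_{(h+1)s'}$. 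For this I would reuse the monotone, separable decomposition from the proof of \cref{lem: recursion}: any policy whose future part lies in $\bigcap_{s'}\tilde N_{(h+1)s'}$ attains the maximal continuation value $V^*_{(h+1)e}(s')$ at \emph{every} $s'$, so the inner minimization over $\pi\in\tilde\Pi_{hs}$ factors through these continuation values and the remaining maximization over $\nu_h(s)$ is precisely the program defining $\hat N_{hs}$; conversely, any $\nu_h(s)\in\hat N_{hs}$ paired with an optimal future attains $V^*_{he}(s)$ and so lies in $N_{hs}^*$.

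Finally I would apply the induction hypothesis $\tilde N_{(h+1)s'} = \hat N_{(h+1)s'}\times\bigtimes_{k>h+1,s''}\hat N_{ks''}$ to each factor. Because $\hat N_{(h+1)s'}$ constrains the distinct coordinate $\nu_{h+1}(s')$ while the tail product $\bigtimes_{k>h+1,s''}\hat N_{ks''}$ is common to all $s'$, intersecting over $s'$ collapses to $\bigcap_{s'}\tilde N_{(h+1)s'} = \bigtimes_{k>h,s'}\hat N_{ks'}$. Combining this with the one-step identity yields $\tilde N_{hs} = \hat N_{hs}\times\bigtimes_{k>h,s'}\hat N_{ks'}$, completing the induction.

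I expect the main obstacle to be the one-step identity, specifically the decoupling of the current-stage argmax from the continuation. The delicate point is the states $s'$ that receive zero transition probability under the relevant $(\pi_h(s),\nu_h(s))$: the bare maximin $N_{hs}^*$ leaves the exploiter's policy at such unreachable states unconstrained, so the clean product decomposition genuinely requires the intersection with $\tilde N_{(h+1)s'}$ to pin down behavior everywhere. Verifying that this intersection is consistent with, rather than strictly smaller in the current coordinate than, the continuation-optimality already forced inside $N_{hs}^*$ is the step I would check most carefully, and it rests entirely on the nonnegativity of the transition weights and the monotonicity argument inherited from \cref{lem: recursion}.
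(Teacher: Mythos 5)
Your proof is correct and takes essentially the same route as the paper's: backward induction on $h$, using \cref{lem: victim} and the value recursion of \cref{lem: recursion} to decouple the stage-$(h,s)$ argmax from the continuation values, then collapsing the future coordinates via the induction hypothesis; your only deviation is folding the tail intersection into $\bigcap_{s'} \tilde N_{(h+1)s'}$ \emph{before} applying the one-step identity, whereas the paper applies the identity with $\bigcap_{s'} N^*_{(h+1)s'}$ and distributes the tail afterward, which is an immaterial reordering of the set algebra. Your treatment of unreachable states likewise mirrors the paper's observation that $N^*_{hs}$ alone leaves those coordinates unconstrained.
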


\begin{proof}
    We proceed by backward induction on $h$. For the base case, consider the final time step $h = H$ and fix any $s \in S$. We see that
    \begin{align*}
        \tilde N_{Hs} &= \argmax_{\nu \in N_{Hs}} \min_{\pi \in \tilde \Pi_{Hs}} V_{He}^{\pi, \nu}(s) \\
        &= \argmax_{\nu_h(s) \in \Delta(m)} \min_{\pi_h(s) \in \hat \Pi_{Hs}} V_{He}^{\pi, \nu}(s) \\
        &= \argmax_{\nu_h(s) \in \Delta(m)} \min_{\pi_h(s) \in \hat \Pi_{Hs}} \E_{\pi_h(s),\nu_h(s)} R_{He}(s, a_v,a_e) \\
        &= \hat N_{Hs}.
    \end{align*}
    The second equality used the definition of $N_{Hs}$ and \cref{lem: victim}. The third equality used the bellman-consistency equations~\ref{equ: consistency}.

    For the inductive step, consider any time $h < H$ and fix any $s \in S$. By \cref{lem: recursion}, we know that
    \begin{equation*}
        V^*_{he}(s) = \max_{\nu_h(s) \in \Delta(m)} \min_{\pi_h(s) \in \hat \Pi_{hs}} \E_{\pi_h(s), \nu_h(s)}\brac{R_{he}(s,a_v,a_e) + \sum_{s'} P_h(s' \mid s,a_v,a_e) V^*_{(h+1)e}(s')}.
    \end{equation*}
    Thus, it is clear that $\nu^* \in N_{hs}^*$ if and only if $\nu_h(s)^* \in \hat N_{hs}$ and $\nu^* \in N^*_{(h+1)s'}$ for all reachable $s'$ and $\nu^* \in N_{(h+1)s'}$ is unconstrained for unreachable $s'$. Then, we see that $N^*_{hs} \cap \ \bigcap_{s'} N^*_{(h+1)s'} = \hat N_{hs} \times \ \bigtimes_{s'} N^*_{(h+1)s'}$. Applying these facts to the definition of $\tilde N_{hs}$ gives,
    \begin{align*}
        \tilde N_{hs} &= N_{hs}^* \cap \bigcap_{k > h, s'} N_{ks'}^* \\
        &= ( N^*_{hs} \cap \bigcap_{s'} N^*_{(h+1)s'}) \cap \bigcap_{k > h+1, s'} N_{ks'}^*\\
        &= (\hat N_{hs} \times \bigtimes_{s'} N^*_{(h+1)s'}) \cap \bigcap_{k > h+1, s'} N_{ks'}^* \\
        &= \hat N_{hs} \times \bigtimes_{s'} (N^*_{(h+1)s'} \cap \bigcap_{k > h+1, s'} N_{ks'}^*) \\
        &= \hat N_{hs} \times \bigtimes_{s'} \tilde N_{(h+1)s'} \\
        &= \hat N_{hs} \times \bigtimes_{s'} \bigtimes_{k > h} \tilde N_{ks'} \\
        &= \hat N_{hs} \times \bigtimes_{k > h, s'} \hat N_{ks'}.
    \end{align*}
    The second to last line used the induction hypotheses. 
\end{proof}

\paragraph{Algorithm Correctness.} Now, observe that by definition of $Q^*$, \cref{equ: Q*},
\begin{align*}
    V^*_{he}(s) &= \max_{\nu_h(s) \in \Delta(m)} \min_{\pi_h(s) \in \hat \Pi_{hs}} \E_{\pi_h(s), \nu_h(s)}\brac{R_{he}(s,a_v,a_e) + \sum_{s'} P_h(s' \mid s,a_v,a_e) V^*_{(h+1)e}(s')} \\
    &= \max_{\nu_h(s) \in \Delta(m)} \min_{\pi_h(s) \in \hat \Pi_{hs}} \pi_h(s)^\top Q^*_{he}(s) \nu_h(s) \\
    &= val(LP~\ref{equ: exploiter-LP}(Q^*_{hv}(s),Q^*_{he}(s))).
\end{align*}
Thus, we see at any stage $(s,h)$, $\nu^*_h(s)$ computed from LP~\ref{equ: exploiter-LP} satisfies $\nu^*_h(s) \in \hat N_{hs}$ and so the total partial policy $\nu^*$ computed by \cref{alg: exploiter} satisfies $\nu^* \in \hat N_{hs} \times \bigtimes_{k > h, s'} \hat N_{ks'}$. By \cref{cor: solutions}, we see that $\nu^* \in \tilde N_{hs}$. Thus, the final policy constructed by \cref{alg: exploiter} satisfies $\nu^* \in \bigcap_{h,s} N^*_{hs}$ and so is a \mpsol policy for the exploiter. 

Inductively, we also see using \cref{prop: exploiter-strategy} that $\tilde N_{hs} \neq \varnothing$ and so $\mpsol \neq \varnothing$ completing the proof of \cref{prop: mp=viser}. Applying \cref{prop: exploiter-strategy} then shows $\hat N_{hs}$ has the form that appears in the main text. \cref{cor: solutions} also implies that $\bigcap_{h,s} N^*_{hs} = \bigtimes_{h,s} \hat N_{hs}$, which completes the proof of \cref{prop: exploiter-mstrategy}.

The proof of \cref{thm: mpsol-efficient} is then immediate from the argument in the main text along with \cref{prop: victim-mstrategy} and \cref{prop: exploiter-mstrategy}.

\end{document}